\NewDocumentCommand{\binomial}{omm}
 {%
  \genfrac(){0pt}{}{#2}{#3}%
  \IfValueT{#1}{_{\!#1}}%
 }
\NewDocumentCommand{\eulerian}{omm}
 {%
  \genfrac<>{0pt}{}{#2}{#3}%
  \IfValueT{#1}{_{\!#1}}%
 }
\def \s {\sigma}
\theoremstyle{plain}
\newtheorem{thm}{Theorem}[section]
\newtheorem{lem}[thm]{Lemma}
\newtheorem{defn}[thm]{Definition}
\theoremstyle{definition}
\newcommand{\sfs}{\mathsf{s}}
\def\yz#1\yz {{\color{blue} [[YZ: #1]] }}
\def\bu#1\bu {{\color{red} [[BU: #1]] }}
\title{Planar Matrices and Arrays of Feynman Diagrams}
\author[a]{Freddy Cachazo,}\emailAdd{fcachazo@pitp.ca}
\author[a,b,c,d]{Alfredo Guevara,}\emailAdd{aguevaragonzalez@fas.harvard.edu}
\author[a,e]{Bruno Umbert,}\emailAdd{bgimenez@uwo.ca}
\author[f,a,g]{and Yong Zhang}\emailAdd{yzhang@perimeterinstitute.ca}
\affiliation[a]{Perimeter Institute for Theoretical Physics, Waterloo, ON N2L 2Y5, Canada}
\affiliation[b]{Department of Physics \& Astronomy, University of Waterloo, Waterloo, ON N2L 3G1, Canada}
\affiliation[c]{CECs Valdivia \& Departamento de F\'isica, Universidad de Concepci\'on, Casilla 160-C,\\ Concepci\'on, Chile}
\affiliation[d]{Center for the Fundamental Laws of Nature, Society of Fellows \& Black Hole Initiative, Harvard
University, Cambridge, MA 02138, USA}
\affiliation[e]{Department of Applied Mathematics, Western University, London, ON N6A 5B7, Canada}
\affiliation[f]{CAS Key Laboratory of Theoretical Physics, Institute of Theoretical Physics, Chinese Academy of Sciences, Beijing 100190, China}
\affiliation[g]{School of Physical Sciences, University of Chinese Academy of Sciences, No.19A Yuquan Road, Beijing 100049, China}
\abstract{Very recently planar collections of Feynman diagrams were proposed by Borges and one of the authors as the natural generalization of Feynman diagrams for the computation of $k=3$ biadjoint amplitudes. Planar collections are one-dimensional arrays of metric trees satisfying an induced planarity and compatibility condition. In this work we introduce planar matrices of Feynman diagrams as the objects that compute $k=4$ biadjoint amplitudes. These are symmetric matrices of metric trees satisfying compatibility conditions. We introduce two notions of combinatorial bootstrap techniques for finding collections from Feynman diagrams and matrices from collections. As applications of the first, we find all $693$, $13\,612$, and $346\,710$ collections for $(k,n)=(3,7), (3,8),$ and $(3,9)$ respectively. As applications of the second kind, we find all $90\, 608$  and $30\,659\,424$ planar matrices that compute $(k,n)=(4,8)$ and $(4,9)$ biadjoint amplitudes respectively. As an example of the evaluation of matrices of Feynman diagrams, we present the complete form of the $(4,8)$ and $(4,9)$ biadjoint amplitudes. We also start the study of higher dimensional arrays of Feynman diagrams, including the combinatorial version of the duality between $(k,n)$ and $(n-k,n)$ objects. 
}
\begin{document}

\maketitle

\addtocontents{toc}{\protect\setcounter{tocdepth}{1}}
\def \tr {\nonumber\\}
\def \la  {\langle}
\def \ra {\rangle}
\def\hset{\texttt{h}}
\def\gset{\texttt{g}}
\def\sset{\texttt{s}}
\def \be {\begin{equation}}
\def \ee {\end{equation}}
\def \ba {\begin{eqnarray}}
\def \ea {\end{eqnarray}}
\def \k {\kappa}
\def \h {\hbar}
\def \r {\rho}
\def \l {\lambda}
\def \be {\begin{equation}}
\def \en {\end{equation}}
\def \bes {\begin{eqnarray}}
\def \ens {\end{eqnarray}}
\def \red {\color{Maroon}}
\def \pt {{\rm PT}}
\def \s {\sigma} 
\def \ls {{\rm LS}}
\def \ma {\Upsilon}
\def \s {\textsf{s}}
\def \t {\textsf{t}}
\def \R {\textsf{R}}
\def \W {\textsf{W}}
\def \U {\textsf{U}}
\def \e {\textsf{e}}

\numberwithin{equation}{section}


\section{Introduction} \label{sec1}

Tree-level scattering amplitudes in a cubic scalar theory with flavor group $U(N)\times U(\tilde N)$ admit a Cachazo-He-Yuan (CHY) formulation based on an integration over the configuration space of $n$ points on $\mathbb{CP}^1$ and the scattering equations \cite{Fairlie:1972zz,Fairlie:2008dg,Cachazo:2013gna,Cachazo:2013hca,Dolan:2013isa}. In \cite{Cachazo:2019ngv}, Early, Mizera and two of the authors introduced generalizations to higher dimensional projective spaces $\mathbb{CP}^{k-1}$. These higher $k$ ``biadjoint amplitudes'' were also shown to have deep connections to tropical Grassmannians. This led to the proposal that Feynman diagrams could be identified with facets of the corresponding ${\rm Trop}\, G(k,n)$ \cite{Cachazo:2019ngv,speyer2004tropical,speyer2005tropical}.   Moreover, the generalized amplitudes and their properties, including generalized soft/hard theorems, and the corresponding scattering equations, including characterizations of solutions, have been further studied in \cite{Cachazo:2019apa,Sepulveda:2019vrz,Cachazo:2019ble,GarciaSepulveda:2019jxn,Abhishek:2020xfy,Early:2022mdn,Cachazo:2021wsz}.  

Motivated by the connection between ${\rm Trop}\, G(2,n)$ with metric trees, which can be identified as Feynman diagrams, and ${\rm Trop}\, G(3,n)$ with metric arrangements of trees \cite{herrmann2009draw}, Borges and one of the authors introduced a generalization to $k=3$ called planar collections of Feynman diagrams as the objects that compute $k=3$ biadjoint amplitudes \cite{Borges:2019csl}. 

The computation of a $k=3$ biadjoint amplitude is completely analogous to that of the standard $k=2$ amplitude but defined as a sum over planar collections of Feynman diagrams 
\be\label{qori}
m_n^{(3)}(\alpha,\beta) = \sum_{{\cal C}\in \Omega(\alpha) \cap \Omega(\beta)} {\cal R}({\cal C}),
\ee
with $\Omega(\alpha)$ the set of all collections of Feynman diagrams which are planar with respect to the $\alpha$-ordering \cite{Borges:2019csl}. More explicitly, the $i^{\rm th}$ tree in a collection is a tree with $n-1$ leaves $\{1,2,\ldots ,n\}\setminus i$ which is planar with respect to the ordering induced by deleting $i$ from $\alpha$. This is why the collection is called {\it  planar} and not the individual trees.

The value ${\cal R}({\cal C})$ of a planar collection ${\cal C}$ is obtained from the following function 
\be\label{fft}
{\cal F}({\cal C}) = \sum_{i,j,k}\pi_{ijk}\, \sfs_{ijk}
\ee
defined in terms of the metrics of the trees in the collection $d^{(i)}_{jk}$ which satisfy a compatibility condition $d^{(i)}_{jk}=d^{(j)}_{ik}=d^{(k)}_{ij}$, thus defining a completely symmetric rank three tensor $\pi_{ijk}:=d^{(i)}_{jk}$ \cite{herrmann2009draw}. Here $\sfs_{ijk}$ is the $k=3$ generalization of Mandelstam invariants, defined as completely symmetric rank-three tensors satisfying \cite{Cachazo:2019ngv}
\be\label{1333}
\sfs_{iij}=0, \quad \sum_{j,k=1}^n \sfs_{ijk}=0, \quad \forall i\in \{1,2,\ldots ,n\}.
\ee
The explicit value is then computed as
\be\label{intew}
 {\cal R}({\cal C})=\int_{\Delta}d^{2(n-4)}f_I \, {\rm exp}\, {\cal F}({\cal C})\,,
\ee
where the domain $\Delta$ is defined by the requirement that all internal lengths of all Feynman diagrams in the collection be positive \cite{Borges:2019csl}.

In this work we continue the study of planar collections of Feynman diagrams by exploiting an algorithm proposed in \cite{Borges:2019csl} for determining all collections for $k=3$ and $n$ points by a ``combinatorial bootstrap'' starting from $k=2$ and $n$-point planar Feynman diagrams. We review in detail the algorithm in section \ref{sec2} and use it to construct all $693$, $13\,612$, and $346\,710$ collections for $(k,n)=(3,7)$, $(3,8)$ and $(3,9)$ respectively. The $693$ collections for $(k,n)=(3,7)$ were already obtained in \cite{Borges:2019csl} by imposing a planarity condition on the metric tree arrangements presented by Herrmann, Jensen, Joswig, and Sturmfels  in their study of the tropical Grassmannian ${\rm Trop}\, G(3,7)$ in \cite{herrmann2009draw}. Also, there are deep connections between positive tropical Grassmannians and cluster algebras as explained by Speyer and Williams in \cite{SpeyerW} and explored by Drummond,  Foster,  Gürdogan, and Kalousios in  \cite{Drummond:2019qjk}. In the latter work  it was found that ${\rm Trop}^+\, G(3,8)$ can be described in terms of $25\,080$ clusters. Here we show that our $13\,612$ planar collections for $(3,8)$ encode exactly the same information as their $25\,080$ clusters. The cluster algebra analysis of ${\rm Trop}^+\, G(3,9)$ has not appeared in the literature but it should be possible to obtain them from our $346\,710$ collections.

We also start the exploration of the next layer of generalizations of Feynman diagrams in section \ref{sec3} and propose that $k=4$ biadjoint amplitudes are computed using planar matrices of Feynman diagrams. In a nutshell, an $n$-point planar matrix of Feynman diagrams ${\cal M}$ is an $n\times n$ matrix with Feynman diagrams as entries. The ${\cal M}_{ij}$ entry is a Feynman diagram with $n-2$ leaves $\{1,2,\ldots ,n\}\setminus \{i,j\}$. Each tree has a metric defined by the minimum distance between leaves, $d^{(ij)}_{kl}$. Here we use superscripts to denote the entry in the matrix of trees and subscripts for the two leaves whose distance is given. Planar matrices of Feynman diagrams must satisfy a compatibility condition on the metrics
\be\label{comp4}
d^{(ij)}_{kl} = d^{(ik)}_{jl}=d^{(il)}_{kj} =d^{(kl)}_{ij} = d_{ik}^{(jl)}=d_{il}^{(kj)}.
\ee
This means that the collection of all metrics defines a completely symmetric rank four tensor $\pi_{ijkl}:= d^{(ij)}_{kl}$.

Using this we generalize the prescription for computing the value $R(T)$ and ${\cal R}({\cal C})$ of $k=2$ and $k=3$ ``diagrams'' to ${\cal R}({\cal M})$ for $k=4$ and therefore their contribution to generalized $k=4$ amplitudes.

Moreover, we find that a second class of combinatorial bootstrap approach can be efficiently used to simplify the search for matrices of diagrams that satisfy the compatibility conditions \eqref{comp4}. The idea is that any column of a planar matrix of Feynman diagrams must also be a planar collection of Feynman diagrams but with one less particle. In the first of our two main examples, any matrix for $(k,n)=(4,8)$ must have columns taken from the set of $693$ $(k,n)=(3,7)$ planar collections. Using that the matrix must be symmetric, one can easily find $91\, 496$ matrices of trees satisfying this purely combinatorial condition. Therefore the set of all valid planar matrices for $(k,n)=(4,8)$ must be contained in the set of those $91\, 496$ matrices. Surprisingly, we find that only $888$ such matrices do not admit a generic metric satisfying \eqref{comp4}. This means that there are exactly $90\, 608$ planar matrices of Feynman diagrams for $(k,n)=(4,8)$. We also find efficient ways of computing their contribution to $m_8^{(4)}(\mathbb{I},\mathbb{I})$. 

As the second main example of the technique, we use the $(3,8)$ planar collections to construct candidate matrices in $(4,9)$. We find $33\,182\,763$ such symmetric objects. Computing their metrics we find that $2\, 523\, 339$ of them are degenerate and therefore the total number of planar matrices of Feynman diagrams for $(4,9)$ is $30\, 659 \, 424$. We present all results, including the amplitudes, in ancillary files
and explain the results in section \ref{sec4}.

In section \ref{sec5} we explain how to use efficient techniques for evaluating the contribution of a given planar array of Feynman diagrams to an amplitude by showing that the integration over the space of metrics is equivalent to the triangulations of certain polytopes and then show how softwares such as PolyMake can be used to carry out the computations.

After identifying collections with $(3,n)$ amplitudes and matrices with $(4,n)$, it is natural to introduce planar $(k-2)$-dimensional arrays of Feynman diagrams as the objects relevant for the computation of $(k,n)$ biadjoint amplitudes. In section \ref{sec6} we discuss these objects and explain the combinatorial version of the duality connecting $(k,n)$ and $(n-k,n)$ biajoint amplitudes at the level of the arrays.

This paper is organized as follows. We explain two combinatorial bootstrap techniques in sections \ref{sec2} and \ref{sec3} respectively, with data gathered and explained in section \ref{sec4}. In section \ref{sec5}, we show how to evaluate the planar arrays of Feynman diagrams as partial amplitudes efficiently. Their duality is discussed in section \ref{sec6}.
We end in section \ref{sec7} with discussions and future directions. 
Further details that
complement the main text can be found in the appendices. Most data is presented in ancillary files.

\section{Planar Collections of Feynman Diagrams} \label{sec2}

In this section we give a short review of the definition and properties of planar collections of Feynman diagrams \cite{Borges:2019csl}. Emphasis is placed on a technique for constructing $n$-particle planar collections starting from special ones obtained by ``pruning'' $n$-point planar Feynman diagrams and then applying a ``mutation'' process. Here we borrow the terminology {\it mutation} from the cluster algebra literature \cite{ClusterA,ClusterB,ClusterC}. The reason for this becomes clear below.  

This pruning-mutating technique is the first combinatorial bootstrap approach we use in this work. The second kind is introduced in section \ref{sec3} as a way of constructing planar matrices of Feynman diagrams from planar collections.

\begin{figure}[!htb]
\centering

\includegraphics[width=150mm]{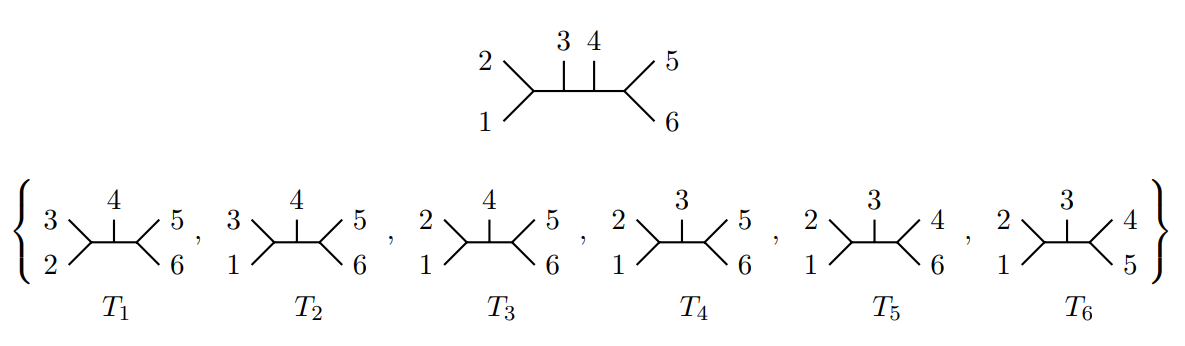}
\caption{An example for an initial planar collection obtained by pruning a 6-point Feynman diagram. Above is the 6-point Feynman diagram to be pruned. Below is the planar collection of 5-point Feynman diagrams obtained by pruning the leaves $1,2,\cdots, 6$ of the above Feynman diagram respectively.}
\label{fig:universe2}
\end{figure}

Without loss of generality, from now on we only consider the canonical ordering $\mathbb{I}:=(1,2,\ldots ,n)$ and every time an object is said to be planar, it means with respect to $\mathbb{I}$.

Recall that for $k=2$ the objects of interest are $n$-particle planar Feynman diagrams in a $\phi^3$ scalar theory. There are exactly $C_{n-2}$ such diagrams\footnote{$C_m$ is the $m^{\rm th}$ Catalan number.}. When Feynman diagrams are thought of as metric trees, a length is associated to each edge and if any of the $n-3$ internal lengths becomes zero we say that the tree degenerates. Here is where the power of restricting to planar objects comes into play, once a given planar tree degenerates, there is exactly one more planar tree that shares the same degeneration. These two planar Feynman diagrams only differ by a single pole and we say that they are related by a mutation. 

Starting from any planar Feynman diagram, one can get all other planar Feynman diagrams by repeating mutations. If no new Feynman diagrams are generated, we are sure we have obtained all of the Feynman diagrams of certain ordering.  

Here the terminology {\it mutation} precisely coincides with the one used in cluster algebras since planar Feynman diagrams are known to be in bijection with clusters of an $A$-type cluster algebra and mutations connect clusters in exactly the same way as degenerations connect planar metric trees. This precise connection between objects connected via degenerations and cluster mutations does not hold for higher $k$ and therefore we hope the abuse of terminology will not cause confusion \cite{Borges:2019csl}.

For the computation of $k=3$ biadjoint amplitudes, planar $n$-point Feynman diagrams are replaced by  planar collections of $(n-1)$-point Feynman diagrams. Each collection is made out of $n$ Feynman diagrams with the $i^{\rm th}$ tree defined on the set $\{1,2,\cdots,n\}\setminus i$ and planar the respect to the ordering $(1,2,\cdots, i-1,i+1,\cdots, n)$. Each tree has its own metric defined as the matrix of minimal lengths from one leaf to another. The metric for the $i^{\rm th}$-tree is denoted as $d^{(i)}_{jk}$ with $j,k \in \{1,2,\cdots,n\}\setminus i$. Moreover, the metrics have to satisfy a compatibility condition $d^{(j)}_{kl}=d^{(k)}_{jl}=d^{(l)}_{jk}$.

A necessary condition for two planar collections of Feynman diagrams to be related is that their individual elements, i.e. the $(n-1)$-point Feynman diagrams, are either related by a mutation or are the same. Of course, in order to prove that the collections are actually related it is necessary to study the space of metrics and show that the two share a common degeneration.    

The key idea is that we can get all planar collections of Feynman diagrams by repeated mutations, starting at any single collection. What is more, we can tell whether we have obtained all of the collections when there are no new collections produced by mutations \footnote{Here we assume that the set of all of planar collections is connected. We have checked this to be the case up to $n=9$.}.   

A more efficient variant of the mutation procedure described above is obtained by introducing multiple initial collections. In fact there is a canonical set of planar collections which are easily obtained from $n$-point planar Feynman diagrams. 

Let us define the {\it initial planar collections} as those obtained via the following procedure. Consider any $n$-point planar Feynman diagram $T$ and denote the tree obtained by pruning (or removing) the $i^{\rm th}$ leaf by $T_i$. Then the set $\{T_1,T_2,\ldots ,T_n\}$ is a planar collection of Feynman diagrams. 

Let us illustrate this with a simple example seen in figure \ref{fig:universe2}.


\begin{table}[!htb]
\centering
\begin{tabular}{c|c|c|c|c|c|c|c}
 \hline
 $(k,n)$&${\rm Number~ of~ \atop collections}$& \multicolumn{5}{c|}{${\rm Numbers~ of~  collections~ for~ each ~kind}$ } &${\rm Number~ of~ \atop layers}$   \\ 
 \hline
\multirow{2}{*}{(3,5)} & \multirow{2}{*}{5}  & 2-mut. &\multicolumn{4}{c|}{ } & \multirow{2}{*}{0}    \\
 &  & 5 &\multicolumn{4}{c|}{ } & \\
  \hline
  \multirow{2}{*}{(3,6)} & \multirow{2}{*}{48}  & 4-mut. & 6-mut. &\multicolumn{3}{c|}{ } & \multirow{2}{*}{3}    \\
 &  & 46 & 2&\multicolumn{3}{c|}{ } & \\
  \hline
  \multirow{2}{*}{(3,7)} & \multirow{2}{*}{693}  & 6-mut. & 7-mut. & 8-mut. &\multicolumn{2}{c|}{ } & \multirow{2}{*}{4}    \\
 &  &  595&  28 & 70 &\multicolumn{2}{c|}{ } & \\
  \hline
  
    \multirow{2}{*}{(3,8)} & \multirow{2}{*}{ 13 612}  & 8-mut. & 9-mut. & 10-mut.  & 11-mut. & 12-mut. & \multirow{2}{*}{8}   \\
 &  &  9 672&  1 488 & 2 280 & 96 &76 \\
  \hline
  
  \multirow{4}{*}{(3,9)} & \multirow{4}{*}{346 710}  & 10-mut.  & 11-mut.  & 12-mut.  & 13-mut. & 14-mut. & \multirow{4}{*}{11}    \\
 &  &  186 147&  61 398& 78 402 & 12 300& 7 668 \\
\cline{3-7}
   &  &  15-mut.  & 16-mut.  & 17-mut. &\multicolumn{2}{c|}{ } &  \\
   &  & 522 & 270 & 3 &\multicolumn{2}{c|}{ } &  \\
  \hline
\end{tabular}
\caption{\label{3ntable} Summary of results for planar collections of Feynman diagrams for $k=3$ and up to $n=9$. The second column gives the total numbers of planar collections. The third column provides the numbers of collections for each kind, classified by the number of mutations. The fourth column indicates how many layers of mutations are necessary to find the complete set of collections starting with the $C_{n-2}$ initial collections.}
\end{table}

Using all such $C_{n-2}$ collections as starting points one can then apply mutations to each and start filling out the space of planar collections in $n$-points. When the method is applied to $(k,n)=(3,5)$ we obtain all planar collections without the need of any mutations since every single planar collection in this case is dual to a $(2,5)$ Feynman diagram. Next, we apply the technique to reproduce the known results for $(3,6)$ starting from the $C_4=14$ initial collections. 

We find that after only three layers of mutations we get all planar collections. Repeating the procedure for $(3,7)$ we find all $693$ planar collections stating from the initial $C_5=42$ collections after four layers of mutations. 

Our first new results in this work are the computation of all $13\, 612$ planar collections in $(3,8)$ and all $346\, 710$ in $(3,9)$. Details on the results and the ancillary files where the collections are presented are provided in section \ref{sec4}.

All results are summarized in table \ref{3ntable}. We classify the planar collections according to their numbers of mutations and count the numbers of collections for each kind as well. The precise definition of metrics and degenerations of planar collections of Feynman diagrams was given in \cite{Borges:2019csl}.

\section{Planar Matrices of Feynman Diagrams} \label{sec3}

In the previous section, we introduced an efficient algorithm for finding all planar collections of Feynman diagrams based on a pruning-mutation procedure. Such collections compute $k=3$ biadjoint amplitudes. 

The next natural question is what replaces planar collections for $k=4$ biadjoint amplitudes. Inspired by the way a single planar Feynman diagram defines a collection by pruning one leaf at a time, we start with a matrix of Feynman diagrams where the $i,j$ element is obtained by pruning the $i^{\rm th}$ and $j^{\rm th}$ leaves of an n-point planar Feynman diagrams as the relevant objects for $k=4$, 
\be\label{matrix}
{\cal M}=
\begin{bmatrix} 
    \emptyset & T^{(1,2)} & \dots & T^{(1,n-1)} & T^{(1,n)} \\
    T^{(2,1)} & \emptyset & \dots & T^{(2,n-1)} & T^{(2,n)} \\
    \vdots & \vdots & \ddots & \vdots & \vdots\\
   T^{(n-1,1)} & T^{(n-1,2)} & \dots & \emptyset & T^{(n-1,n)}\\
   T^{(n,1)} & T^{(n,2)} & \dots & T^{(n,n-1)} & \emptyset 
    \end{bmatrix}\,,
\ee
as first proposed in \cite{Borges:2019csl}.  We denote the Feynman diagram in the $i^{\rm th}$ row and $j^{\rm th}$ column, where labels $i$ and $j$ are absent, by $T^{(i,j)}$.  

We add a metric to every Feynman diagram $T^{(i,j)}$ in the matrix, and  denote the lengths of internal and external edges as $f_{I}^{(ij)}$ and $e_{m}^{(ij)}$ respectively.  Correspondingly,  we can use $d_{kl}^{(ij)}$ to denote the minimal  distance between two leaves $k$ and $l$. Up to this point, the edge lengths and hence distances $d^{(ij)}_{kl}$ of different Feynman diagrams in the matrix have no relations. We can relate them by imposing compatibility conditions analogous to those for collections of Feynman diagrams. This leads to the following definition.

\begin{defn}
A {\it planar matrix of Feynman diagrams} is an $n\times n$ matrix ${\cal M}$ with component ${\cal M}_{ij}$ given by a metric tree with leaves $\{1,2,\ldots ,n\}\setminus \{i,j\}$ and planar with respect to the ordering $(1,2,\cdots, \slashed{i},\cdots, \slashed{j},\cdots, n)$ satisfying the following conditions
\begin{itemize}
    \item  Diagonal entries are the empty tree ${\cal M}_{ii}=\emptyset$.
    \item  Compatibility \eqref{comp4} 
$$d^{(ij)}_{kl} = d^{(ik)}_{jl}=d^{(il)}_{kj} =d^{(kl)}_{ij} = d_{ik}^{(jl)}=d_{il}^{(kj)}.$$
\end{itemize}

\end{defn}

Note that the compatibility condition has several important consequences. The first is that since a given metric is symmetric in their labels, i.e. $d_{kl}^{(ij)}=d_{lk}^{(ij)}$ which is obvious from its definition as the minimum distance from $k$ to $l$, one finds that the matrix ${\cal M}$ must be symmetric as stated in the following lemma. 

\begin{lem}
\label{symmetric}
Planar matrices of Feynman diagrams are symmetric.
\end{lem}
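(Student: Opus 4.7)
The plan is to show that the metric of the tree $\mathcal{M}_{ij}$ coincides with the metric of $\mathcal{M}_{ji}$, and then invoke the standard fact that a (non-degenerate) metric tree is uniquely recovered from its leaf-to-leaf distance function.

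First, I would fix arbitrary labels $i,j$ and any two leaves $k,l\in\{1,\dots,n\}\setminus\{i,j\}$, and chain together the compatibility identity \eqref{comp4} with the tautological symmetry of a distance function to obtain
\begin{equation*}
d^{(ij)}_{kl}\;=\;d^{(kl)}_{ij}\;=\;d^{(kl)}_{ji}\;=\;d^{(ji)}_{kl}.
\end{equation*}
Here the outer equalities are two applications of the compatibility condition (putting the pair $\{k,l\}$ into the superscript and back), while the middle one is just the statement that on any fixed tree the minimum distance from $i$ to $j$ equals the minimum distance from $j$ to $i$. Thus the entire leaf-to-leaf distance matrix of $\mathcal{M}_{ij}$ agrees with that of $\mathcal{M}_{ji}$.

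The second step is to promote this equality of metrics to an equality of metric trees. Both $\mathcal{M}_{ij}$ and $\mathcal{M}_{ji}$ are trivalent metric trees with the same leaf set $\{1,\dots,n\}\setminus\{i,j\}$, and for such trees with strictly positive internal edge lengths the tree topology and all edge lengths are determined by the induced metric on the leaves (this is the classical tree-metric reconstruction theorem, equivalently the four-point condition characterization of tree metrics, which is the underlying statement behind the identification of $\mathrm{Trop}\,G(2,m)$ with the space of metric trees on $m$ leaves used throughout the paper). Applying this reconstruction to the common metric $d^{(ij)}_{\cdot\,\cdot}=d^{(ji)}_{\cdot\,\cdot}$ gives $\mathcal{M}_{ij}=\mathcal{M}_{ji}$, proving the lemma.

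The only real subtlety I expect is the degeneracy issue: if some internal edge of $\mathcal{M}_{ij}$ has zero length, the underlying combinatorial tree is no longer uniquely determined by the metric. However, this is precisely the situation the paper excludes when it speaks of matrices that \emph{do not admit a generic metric} (the $888$ out of $91\,496$ case for $(4,8)$). So as long as one works with planar matrices that admit a generic realization of \eqref{comp4} --- which is the definition in use --- the reconstruction step is legitimate and the symmetry $\mathcal{M}_{ij}=\mathcal{M}_{ji}$ follows immediately from the compatibility condition plus the symmetry of distances.
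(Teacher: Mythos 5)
Your proof is correct and follows essentially the same route as the paper: both use the compatibility condition \eqref{comp4} together with the symmetry of the leaf-to-leaf distance to show that $\mathcal{M}_{ij}$ and $\mathcal{M}_{ji}$ carry identical metrics, and then invoke the fact (proved in the paper's appendix for non-degenerate binary metric trees) that a metric tree is uniquely reconstructed from its metric. Your closing remark about degenerate metrics is the same caveat implicit in the paper's appendix lemma, so nothing is missing.
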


\begin{proof}
The symmetry of the matrix follows from realizing that the compatibility condition requires that $d_{kl}^{(ij)}=d_{ij}^{(kl)}$ and therefore the symmetry of the metric on the lhs in the leave labels $k$ and $l$ implies that of the rhs is symmetric in the matrix labels $k$ and $l$. In order to complete the proof, it is enough to note that a binary metric tree is uniquely determined by its metric as we show in appendix \ref{apA}. 
\end{proof}

Planar collections of Feynman diagrams have $(n-4)n$ internal edges; $n-4$ for each of the $n$ trees in the collection. However, only $2(n-4)$ are independent once the compatibility condition is imposed on the metrics as reviewed in \cite{Borges:2019csl}. In the case of planar matrices of Feynman diagrams there are ${n\choose2} (n-4)$ internal lengths $f_{I}^{(ij)}$ with $1\leq i<j\leq n, 1\leq I\leq n-5$ while the compatibility conditions \eqref{comp4} reduce the number down to $3(n-5)$ independent ones. This means that a planar matrix has at least $3(n-5)$ possible degenerations. The precise number depends on the structure of the trees in the matrix. 

In analogy with planar collections, we say that two planar matrices are related via a mutation if they share a co-dimension one degeneration.

Recall that an initial planar collection is obtained by pruning a leaf of the same $n$-point planar Feynman diagram to produce $n$ different $(n-1)$-point trees.  We can also get an {\it initial planar matrix} by pruning two different leaves at a time from the same $n$-point planar Feynman diagram. See figure \ref{universe3} for an example. 
\begin{figure}[!htb]
\centering
\includegraphics[width=150mm]{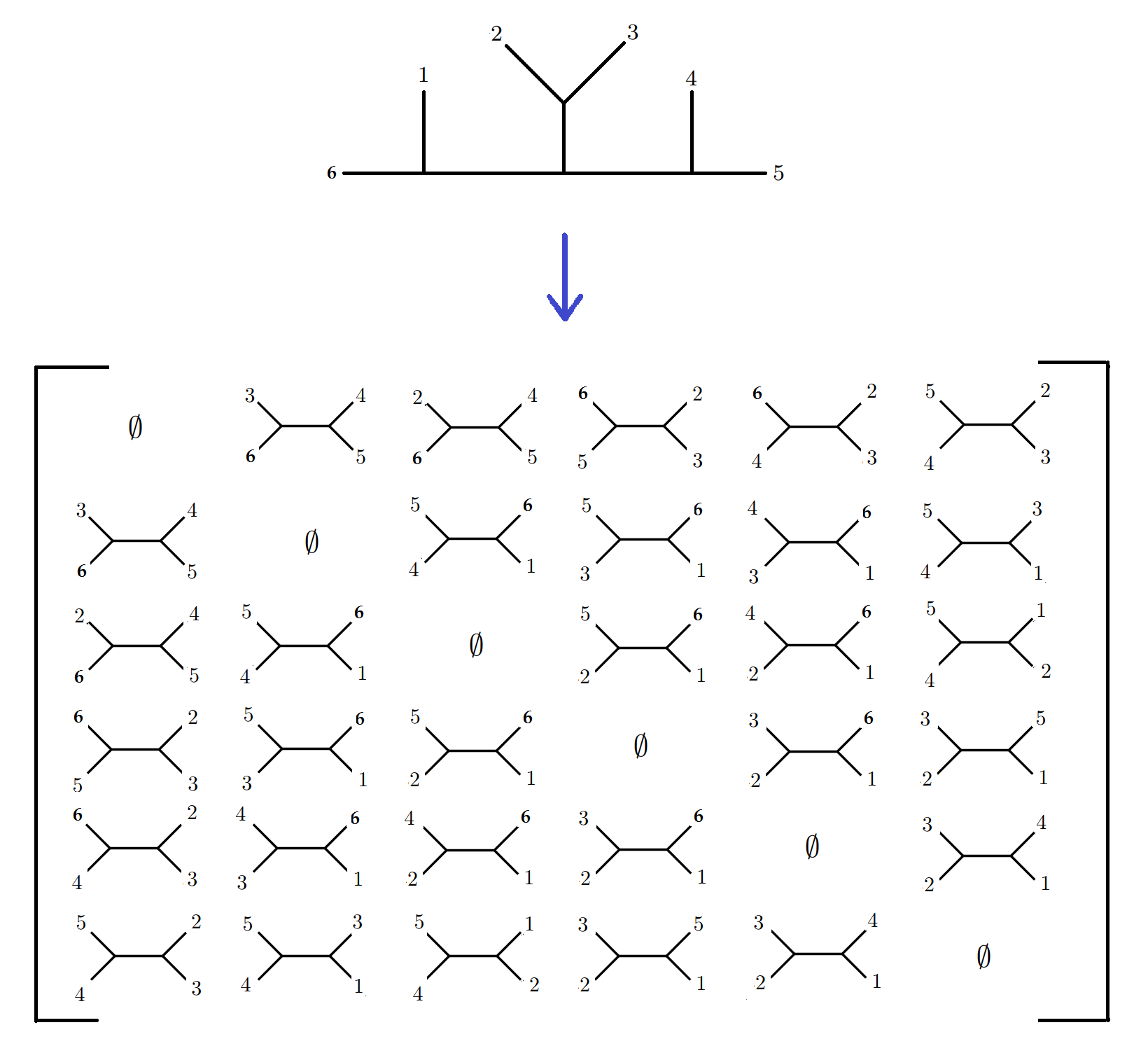}
\caption{An example for a 6-point initial planar matrix. Above we show a 6-point Feynman diagram. Below there is a symmetric matrix of 4-point Feynman diagrams obtained by pruning two leaves from the set $1,2,\cdots, 6$ at a time of the above Feynman diagram. The Feynman diagram from the $i^{\rm th}$ column and  $j^{\rm th}$ row has the $i^{\rm th}$ and $j^{\rm th}$ leaves pruned. \label{universe3}}
\end{figure}

Using all such $C_{n-2}$ matrices as starting points one can then apply mutations to each and start filling out the space of planar matrices in $n$-points. 

The contribution to the amplitudes of every planar matrix can be calculated individually.   
Consider the function of a planar matrix of Feynman diagrams ${\cal M}$,
\be
{\cal F}({\cal M}):= \sum_{1\leq i,j,k,l\leq n} \pi_{ijkl}\, \sfs_{ijkl},
\ee
with $\pi_{ijkl}:=d^{(ij)}_{kl}$. Here $\sfs_{ijkl}$ are the generalized symmetric Mandelstam invariants introduced in \cite{Cachazo:2019ngv}. These satisfy the conditions
\be\label{condS}
 \sfs_{iijk}=0, \qquad  \sum_{j,k,l=1}^n \sfs_{ijkl} =0 \quad  \forall i.
\ee
At this point it is not obvious but these conditions make it possible to write ${\cal F}({\cal M})$ in a form free of any length of leaves $e_{m}^{(ij)}$. In section \ref{sec6} we explain this phenomenon in more generality for any value of $k$.

An integral of ${\cal F}({\cal M})$ over independent internal lengths  $\{f_{1},f_{2},\cdots,f_{3(n-5)}\}$  gives the contribution to $k=4$ biadjoint amplitudes
\be 
 {\cal R}({\cal M})=\label{intFTM}
 \int_{\Delta}d^{3(n-5)}f_I \, {\rm exp}\, {\cal F}({\cal M})\,,
\ee
where the domain $\Delta$ is defined by the condition that all ${n\choose2} (n-4)$ internal lengths are positive and not only the $3(n-5)$ independent ones. For future use we comment that it is possible to consider \eqref{intFTM} also for degenerate matrices and in such cases it integrates to zero as its domain is a set of measure zero.  

Another important observation is that, in the $j$-th column or row of a planar matrix, all Feynman diagrams are free of particle $j$ and
the compatibility condition \eqref{comp4} requires
\be\label{comp444}
d^{(ij)}_{kl} = d^{(kj)}_{li}=d^{(lj)}_{ik},
\ee
for every three different particles $i,k,l$ of the remaining $n-1$ particles.
This means the $j$-th column or row is nothing but a planar collection of Feynman diagrams. Each column of a planar matrix is therefore made out of planar collections of $(3,n-1)$. Besides, once several columns have been fixed, the remaining columns have much less choices because of the symmetry requirement of  the matrix. This simple but powerful observation leads to the second kind of combinatorial bootstrap, which we describe next.

\subsection{Second Combinatorial Bootstrap} \label{sec31}

Suppose we have obtained all of the $N$ planar collections for the ordering $(1,2,\cdots,n-1)$. Let us denote the set of all such collections as $E_{3,n-1}= \{{\cal C}_1,{\cal C}_2,\cdots, {\cal C}_N \}$. The last column $\{ T^{(1,n)},T^{(2,n)},\cdots, T^{(n-1,n)}\}$ (here we have omitted the trivial empty tree $\emptyset$) of any planar matrix ${\cal M}$, where by definition particles $1,2,\cdots,n-1$ are deleted respectively in addition to the common missing particle $n$, must be an element of $E_{3,n-1}$.  

Now we consider a cyclic permutation with respect to the order $(1,2,\cdots,n-1,n)$ of particle labels of the set $E_{3,n-1}$,   
\be\label{cycper}
E_{3,n-1}^{(a)}=\{{\cal C}_1^{(a)},{\cal C}_2^{(a)},\cdots, {\cal C}_N ^{(a)} \}:= E_{3,n-1}{\big|}_{i \to i+a}\,.
\ee 
Clearly, particle labels are to be understood modulo $n$. One can see that $E_{3,n-1}^{(a)}$ is the set of all planar collections for the ordering $(1,2,\cdots, a-1, a+1, \cdots  ,n)$ with particle $a$ absent. By definition, we have $E_{3,n-1}^{(n)}\equiv E_{3,n-1}^{(0)}\equiv E_{3,n-1}$. The $a$-th column  $\{ T^{(1,a)},T^{(2,a)},\cdots, T^{(a-1,a)},T^{(a+1,a)},\cdots, T^{(n-1,a)}\}$ (here we have once again omitted the trivial tree $\emptyset$) of a planar matrix ${\cal M}$ must belong to the set $E_{3,n-1}^{(a)}$. Thus any planar matrix of Feynman diagrams must take the form
\be\label{formm}
{\cal M}= [ {\cal C}_{i_1}^{(1)},  {\cal C}_{i_2}^{(2)}, \cdots, {\cal C}_{i_n}^{(n)} ] \,, \quad \text{with}~ 1\leq i_1,\cdots,i_n \leq N\,. 
\ee 
Naively, we have $N$ choices for each column and hence $N^n$ candidate planar matrices. In principle, one could take this set of $N^n$ matrices and impose the compatibility condition on the metrics thus reducing the set to that of all planar matrices of Feynman diagrams. However, this procedure is impractical already for $n=7$ where $N=693$. 

Luckily, according to the Lemma \ref{symmetric}, the symmetry requirement of a planar matrix reduces this number dramatically. It is much more efficient to find possible planar matrices from all of the symmetric matrices of the form \eqref{formm}.  

Using this method we have obtained all planar matrices up to $n=9$. 
Table \ref{k4matrices} is a summary of our results.

\begin{table}[!htb]  
  \centering
  \begin{tabular}{ |p{3.5cm}|p{2.0cm}|p{2.0cm}|p{2.0cm}|p{2.0cm}|}
    \cline{2-5} 
    \multicolumn{1}{c|}{} &\hspace{5mm}\textbf{(4,\,6)} &\hspace{5mm}\textbf{(4,\,7)}
    &\hspace{5mm}\textbf{(4,\,8)}&\hspace{5mm}\textbf{(4,\,9)}\\
 \hline
 Planar Matrices   & \hspace{7mm}14 & \hspace{7mm}693 & \hspace{5mm}90\,608 & \hspace{2mm}30\,659\,424\\
  \hline
Degenerate Matrices & \hspace{8mm}0 & \hspace{8mm}0 & \hspace{7mm}888 & \hspace{3mm}2\,523\,339\\
  \hline
  \end{tabular}
  \caption{Number of planar matrices of Feynman diagrams and number of degenerate matrices for different values of $n$.}
  \label{k4matrices}
  \end{table}
  
More explicitly, when this method is applied to $(k,n)=(4,6)$, we obtain exactly all $14$ planar matrices, which are dual to the $C_4=14$ planar Feynman diagrams of $(2,6)$.  As there are $693$ planar collections in $(3,7)$, the duality between $(3,7)$ and $(4,7)$ implies that there should be $693$ planar matrices in (4,7) as well. In fact, our combinatorial bootstrap procedure results in exactly that number! Moreover, in section \ref{sec6} we explain how the $693$ planar matrices of Feynman diagrams map one to one onto the $693$ planar collections via the duality. 

Our second set of new results corresponds to the more interesting cases of $(4,8)$ and $(4,9)$, where our procedure leads to $91\, 416$ and $33\, 182\, 763$ symmetric matrices respectively. 

Having the set of all possible candidate matrices, we further determined that $90\, 608$ and $30\, 659\, 424$ of them respectively satisfy the compatibility conditions \eqref{comp4} while not becoming degenerate and thus get these numbers of planar matrices of Feynman diagrams. We see that in both cases, the combinatorial bootstrap came very close to the correct answer. We comment that the extra $888$ and 
$2\, 523\, 339$ ``offending'' symmetric matrices are actually degenerate planar matrices. This means that if we were to use all matrices obtained from the bootstrap in the formula for the amplitude we would still get the correct answer since the extra matrices integrate to zero under the formula \eqref{intFTM}. So we can just use all of the symmetric matrices to calculate the biadjoint amplitudes for $k=4$ as well.  

Below we show two explicit examples for $n=6,7$ in order to illustrate the procedure. These examples show why this is an efficient technique for getting planar matrices from collections of $(3,n-1)$. Details on the results for $(4,8)$ and $(4,9)$ and the ancillary files where the collections are presented are provided in section \ref{sec4}.

\subsection{A Simple Example: From $(3,5)$ to $(4,6)$} \label{sec32}

Now we proceed to show an explicit example of how to obtain planar matrices of Feynman diagrams for $(4,6)$. 
In this example, given the duality $(4,6)\sim (2,6)$, we could obtain the planar matrices by picking $n=6$ Feynman diagrams in $k=2$ and remove two leaves in a systematic way as shown in figure \ref{universe3}. Here, however, we introduce an algorithm to get the matrices using a second bootstrap approach constrained by the consistency conditions explained above, thus obtaining planar matrices from planar collections of Feynman diagrams of $(3,5)$. 

This algorithm works for general $n$, i.e. it obtains planar matrices of $(4,n)$ from planar collections of $(3,n-1)$, and is going to be particularly useful for larger $n$, where the number of matrices is considerably large. 

Before going through the algorithm, let's review the planar collections of $(3,5)$. There are 5 planar collections  of Feynman diagrams $E_{3,5}=\{{\cal C}_1,{\cal C}_2,{\cal C}_3,{\cal C}_4,{\cal C}_5\}$ \cite{Borges:2019csl}. These can come from the caterpillar tree in $n=5$ and its 4 cyclic permutations, shown in figure \ref{35col}.  

\begin{figure}[!htb]
\centering
\includegraphics[width=140mm]{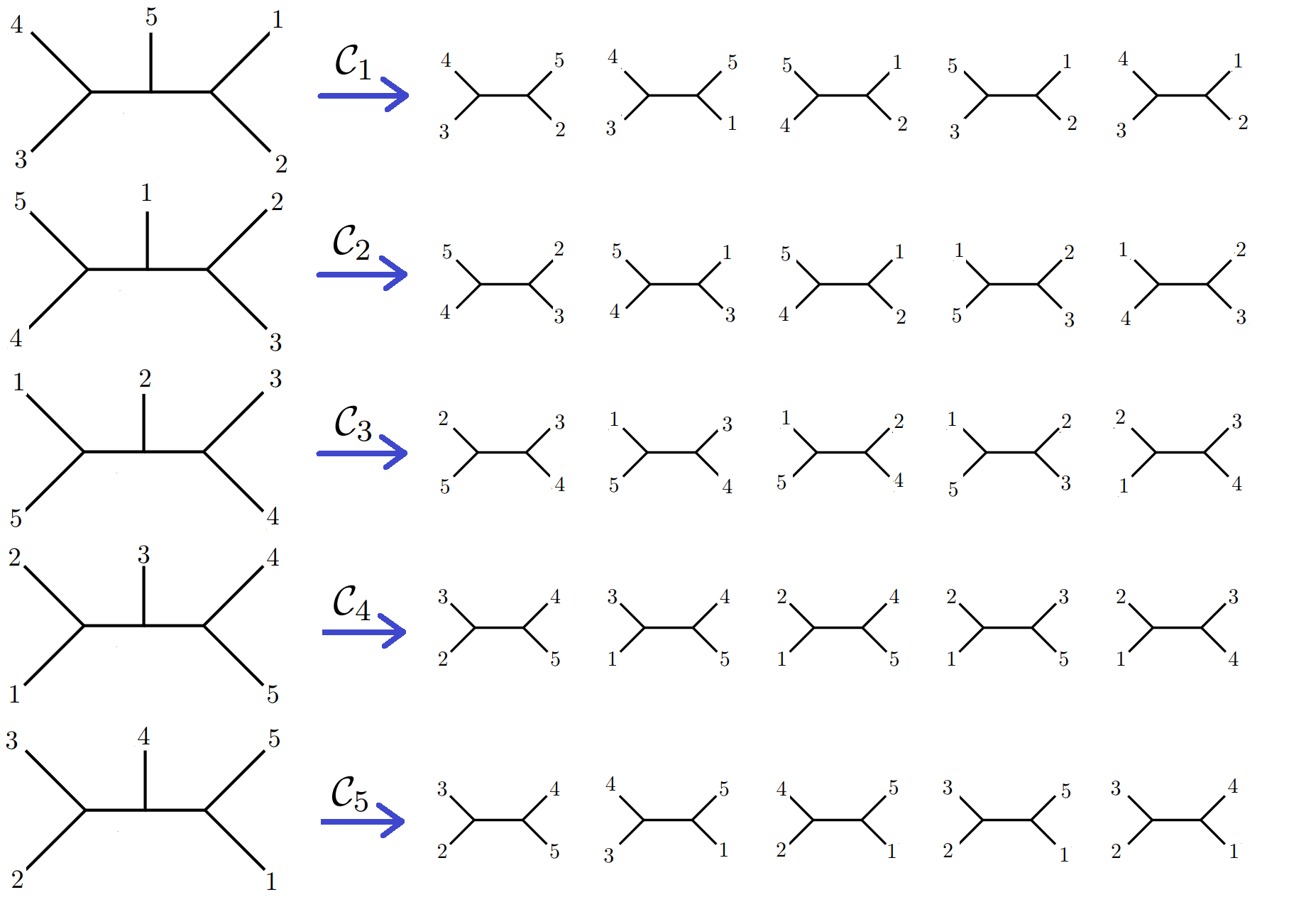}
\caption{ The five $k=2$ planar Feynman diagrams and their corresponding collections in $(k,n)=(3,5)$. \label{35col}}
\end{figure}

In what follows, we will adopt the notation $T[ab|cd]$ for a 4-point Feynman diagram with $a,b$  and $c,d$ sharing a vertex, i.e.
\be 
T[ab|cd] :=\raisebox{-.8cm}{
\tikz[scale=.6]{\draw[thick] (-1,1) node [left]{$b$}--(0,0)--(-1,-1) node [left]{$a$} (0,0)--(1,0)--(2,1)node [right]{$c$} (1,0)--(2,-1) node [right]{$d$};} }
\ee 
By applying cyclic permutations \eqref{cycper} on $E_{3,5}$ we get the set $E_{3,5}^{(1)}$, $E_{3,5}^{(2)}$, $\cdots$, $E_{3,5}^{(6)}$ with $E_{3,5}^{(6)}=E_{3,5}$. In the more compact notation defined above we have, for instance
 \footnote{Here, for example, one can see the cyclic permutation  $\{1\to3, 2\to4, 3\to5,4\to6,5\to1,6\to2\}$ of ${\cal C}_1$ as 
 $\{T[45|63],T[45|62],T[23|56],T[23|46],T[23|45]\}$ with the leaves $3$, $4$, $5$, $6$ and $1$ pruned, respectively, in addition to the common missing leaf $2$.  We rotate the list from right by 1 to get a planar collection ${\cal C}_{1}^{(2)}$ with the leaves $1$, $3$, $4$, $5$ and $6$ pruned, respectively.}
 \begin{align} \label{perm5}
&E_{3,5}^{(1)}=\{{\cal C}_1^{(1)},{\cal C}_2^{(1)},{\cal C}_3^{(1)},{\cal C}_4^{(1)},{\cal C}_5^{(1)}\}
=
\left\{
\begin{array}{ccccccccc}
 T[45|63] && T[34|56] && T[45|63] && T[34|56] && T[34|56] \\
 T[45|62] && T[24|56] && T[45|62] && T[24|56] && T[45|62] \\
 T[23|56] &,& T[23|56] &,& T[35|62] &,& T[23|56] &,& T[35|62] \\
 T[23|46] && T[34|62] && T[34|62] && T[23|46] && T[34|62] \\
 T[23|45] && T[34|52] && T[23|45] && T[23|45] && T[34|52] \\
\end{array}
\right\}\,,
 \end{align}
  \begin{align} \label{perm55}
E_{3,5}^{(2)}=\{{\cal C}_1^{(2)},{\cal C}_2^{(2)},{\cal C}_3^{(2)},{\cal C}_4^{(2)},{\cal C}_5^{(2)}\}
=
\left\{
\begin{array}{ccccccccc}
 T[34|56] && T[45|63] && T[34|56] && T[34|56] && T[45|63] \\
 T[14|56] && T[45|61] && T[14|56] && T[45|61] && T[45|61] \\
 T[13|56] &,& T[35|61] &,& T[13|56] &,& T[35|61] &,& T[13|56] \\
 T[34|61] && T[34|61] && T[13|46] && T[34|61] && T[13|46] \\
 T[34|51] && T[13|45] && T[13|45] && T[34|51] && T[13|45] \\
\end{array}
\right\}\,.
 \end{align}

The idea of the second bootstrap is that each column of a planar matrix is a planar collection. In other words, a planar matrix must take the form
\be\label{form6666}
{\cal M}= [ {\cal C}_{i_1}^{(1)},  {\cal C}_{i_2}^{(2)}, \cdots, {\cal C}_{i_6}^{(6)} ] \,, \quad \text{with}~ 1\leq i_1,\cdots,i_6 \leq 5\,, 
\ee 
where each element in ${\cal M}$ corresponds to a column, thus the $i$-th column belongs to the set $E_{3,5}^{(i)}$ subject to the $i$-th permutation. There are five choices for the first column, since there are five collections in $(3,5)$. However, once one of the collections is chosen, the choices for the remaining five columns get substantially reduced.  

For example, let's choose the first column of the matrix to be the first collection  ${\cal C}_{1}^{(1)}$. The symmetry of the matrix implies $T^{(1,2)}=T^{(2,1)}$, thus the first tree of the second column  $ {\cal C}_{i_2}^{(2)}$ must be the first tree of the first column, i.e. $T[45|63]$ \footnote{Recall that ${\cal C}_{1}^{(1)}= \{T[45|63],$ $T[45|62],$ $T[23|56],$ $T[23|46],$ $T[23|45] \}$.}. By looking at \eqref{perm5} and \eqref{perm55} we find that only $ {\cal C}_{2}^{(2)}$ and $ {\cal C}_{5}^{(2)}$ satisfy this requirement. 
Similarly, we select candidates from $E_{3,5}^{(i)}$ by again imposing the symmetry condition
$T^{(1,i)}=T^{(i,1)}$ now
for $i=3,4,5,6$ (see figure \ref{fig:universfade3} for a sketch).

\begin{figure}[!htb]
\centering
\begin{overpic}[scale=0.25]{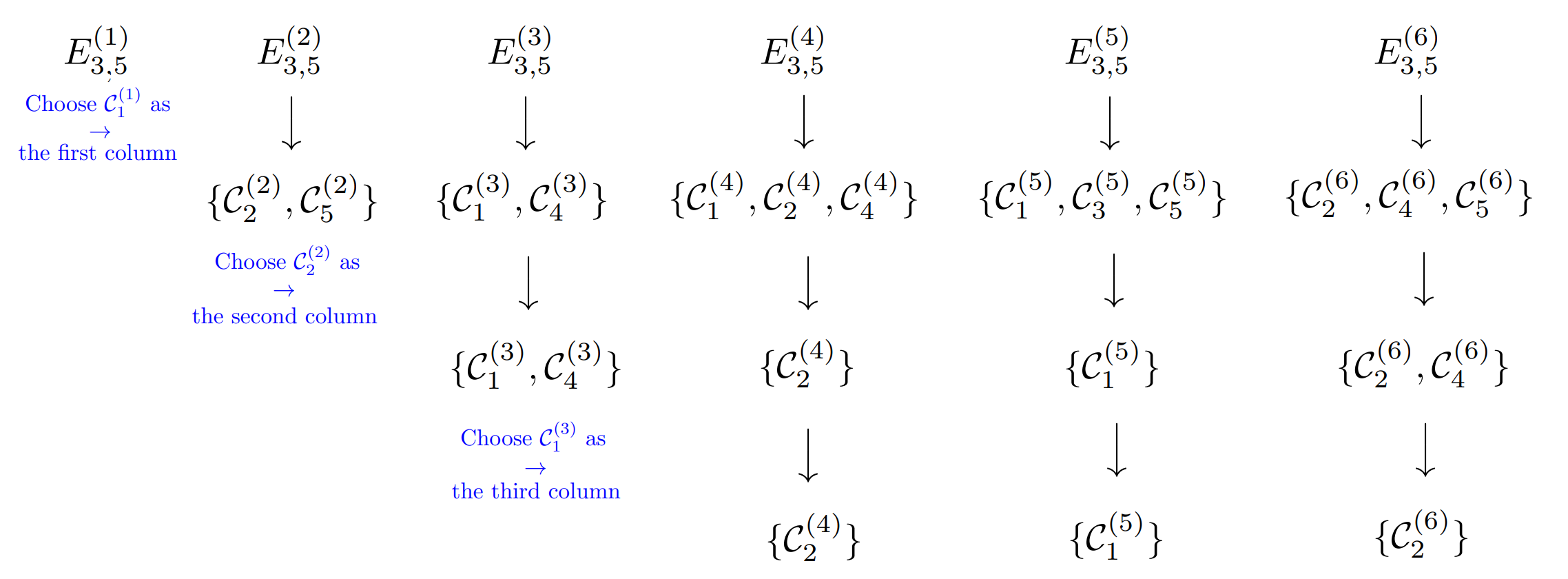}
\end{overpic}
\caption{\label{fig:universfade3} Illustration of the second combinatorial bootstrap for obtaining planar matrices of Feynman diagrams. Here we choose $ {\cal C}_{1}^{(1)}$, $ {\cal C}_{2}^{(2)}$ and $ {\cal C}_{1}^{(3)}$ as the first three columns and then get a symmetric planar matrix by filling in the remaining three columns with  $ {\cal C}_{2}^{(4)}$, $ {\cal C}_{1}^{(5)}$ and $ {\cal C}_{2}^{(6)}$. See also table \ref{46matrices22}.  }
\end{figure}
With this approach, the number of choices for the remaining 5 columns has been reduced from the naive $5^5=3\,125$ to $2\times 2\times 3\times 3\times 3= 108$. 

Therefore, we can now forget about the first column and focus on the possible 108 choices for the remaining 5 columns. Let's for instance choose ${\cal C}_{2}^{(2)}= \{T[45|63],$ $T[45|61],$ $T[35|61],$ $T[34|61],$ $T[13|45] \}$ for the second column of the matrix. Because of the symmetry condition $T^{(2,i)}=T^{(i,2)}$ in \eqref{matrix}, only one or two candidates are selected for each of the remaining four columns, see third row in figure \ref{fig:universfade3}. 

By going on with the procedure above, we end up with a planar matrix of Feynman diagrams  
\begin{align}
&\,\,[ {\cal C}_{1}^{(1)},  {\cal C}_{2}^{(2)}, {\cal C}_{1}^{(3)},{\cal C}_{2}^{(4)},  {\cal C}_{1}^{(5)},  {\cal C}_{2}^{(6)} ] 
\\
=& 
\left[
\begin{array}{ccccccccccc}
 \emptyset && T[45|63] && T[45|62] && T[23|56] && T[23|46] && T[23|45] \\
 T[45|63] && \emptyset && T[45|61] && T[35|61] && T[34|61] && T[13|45] \\
 T[45|62] && T[45|61] && \emptyset && T[25|61] && T[24|61] && T[12|45] \\
 T[23|56] && T[35|61] && T[25|61] && \emptyset && T[23|61] && T[23|51] \\
 T[23|46] && T[34|61] && T[24|61] && T[23|61] && \emptyset && T[23|41] \\
 T[23|45] && T[13|45] && T[12|45] && T[23|51] && T[23|41] && \emptyset \\
\end{array}
\right]\,,
\nonumber
\end{align}
which happens to be ${\cal M}_1$ in table \ref{46matrices22} on the next page and is also the example shown in figure \ref{universe3}.

Had we chosen ${\cal C}_{5}^{(2)}$ for the second column instead of ${\cal C}_{2}^{(2)}$, we would have found another two planar matrices using the same procedure, which correspond to ${\cal M}_2$ and ${\cal M}_3$ in table \ref{46matrices22}. Hence, we find a total of 3 planar matrices for the initial choice ${\cal C}_{1}^{(1)}$.  

Likewise, one finds 3, 2, 4 and 2 planar matrices for the initial choices ${\cal C}_{2}^{(1)}$, ${\cal C}_{3}^{(1)}$, ${\cal C}_{4}^{(1)}$ and ${\cal C}_{5}^{(1)}$, respectively, thus giving $14$ planar matrices in total. One can check that all these 14 matrices satisfy the compatibility conditions \eqref{comp4}. Therefore, all of them contribute to the biadjoint amplitude in $k=3$. 

In table \ref{46matrices22} we present all $14$ planar matrices of Feynman diagrams in $(4,6)$, explicitly showing the corresponding collections in each column.
\begin{table}[!htb]  
  \centering
  \begin{tabular}{ |p{1.5cm}|p{3.9cm}||p{1.5cm}|p{3.9cm}|}
  \cline{1-4}
    \textbf{Matrix} & \hspace{9mm}{ 
    \textbf{Collections}
    } & \textbf{Matrix} & \hspace{9mm}\textbf{Collections}\\
 \hline
 \hspace{4mm}${\cal M}_1$   & \hspace{3.5mm}$[{\cal C}_1, {\cal C}_2, {\cal C}_1, {\cal C}_2, {\cal C}_1, {\cal C}_2]$ & \hspace{4mm}${\cal M}_8$ & \hspace{3.5mm}$[{\cal C}_3, {\cal C}_2, {\cal C}_1, {\cal C}_5, {\cal C}_4, {\cal C}_4]$\\
  \hline
\hspace{4mm}${\cal M}_2$ & \hspace{3.5mm}$[{\cal C}_1, {\cal C}_5, {\cal C}_4, {\cal C}_4, {\cal C}_3, {\cal C}_2]$ & \hspace{4mm}${\cal M}_9$ & \hspace{3.5mm}$[{\cal C}_4, {\cal C}_4, {\cal C}_3, {\cal C}_2, {\cal C}_1, {\cal C}_5]$\\
\hline
 \hspace{4mm}${\cal M}_3$   & \hspace{3.5mm}$[{\cal C}_1, {\cal C}_5, {\cal C}_4, {\cal C}_1, {\cal C}_5, {\cal C}_4]$ & \hspace{4mm}${\cal M}_{10}$ & \hspace{3.5mm}$[{\cal C}_4, {\cal C}_1, {\cal C}_5, {\cal C}_4, {\cal C}_1, {\cal C}_5]$ \\
  \hline
\hspace{4mm}${\cal M}_4$ & \hspace{3.5mm}$[{\cal C}_2, {\cal C}_4, {\cal C}_3, {\cal C}_2, {\cal C}_4, {\cal C}_3]$ & \hspace{4mm}${\cal M}_{11}$ & \hspace{3.5mm}$[{\cal C}_4, {\cal C}_3, {\cal C}_2, {\cal C}_4, {\cal C}_3, {\cal C}_2]$\\
\hline
 \hspace{4mm}${\cal M}_5$   & \hspace{3.5mm}$[{\cal C}_2, {\cal C}_1, {\cal C}_5, {\cal C}_4, {\cal C}_4, {\cal C}_3]$ & \hspace{4mm}${\cal M}_{12}$ & \hspace{3.5mm}$[{\cal C}_4, {\cal C}_3, {\cal C}_2, {\cal C}_1, {\cal C}_5, {\cal C}_4]$\\
  \hline
\hspace{4mm}${\cal M}_6$ & \hspace{3.5mm}$[{\cal C}_2, {\cal C}_1, {\cal C}_2, {\cal C}_1, {\cal C}_2, {\cal C}_1]$ & \hspace{4mm}${\cal M}_{13}$ & \hspace{3.5mm}$[{\cal C}_5, {\cal C}_4, {\cal C}_4, {\cal C}_3, {\cal C}_2, {\cal C}_1]$\\
\hline
\hspace{4mm}${\cal M}_7$ & \hspace{3.5mm}$[{\cal C}_3, {\cal C}_2, {\cal C}_4, {\cal C}_3, {\cal C}_2, {\cal C}_4]$ & \hspace{4mm}${\cal M}_{14}$ & \hspace{3.5mm}$[{\cal C}_5, {\cal C}_4, {\cal C}_1, {\cal C}_5, {\cal C}_4, {\cal C}_1]$\\
\hline
  \end{tabular}
  \caption{ \label{46matrices22} Planar matrices of Feynman diagrams in $(4,6)$. Here we 
  abbreviate $[ {\cal C}_{i_1}^{(1)},  {\cal C}_{i_2}^{(2)}, \cdots, {\cal C}_{i_6}^{(6)} ] $ as $[ {\cal C}_{i_1},  {\cal C}_{i_2}, \cdots, {\cal C}_{i_6} ] $
 since the superscripts can be inferred from the position of ${\cal C}_i$ in the brackets.}
\end{table}

\subsection{A More Interesting Example: From $(3,6)$ to $(4,7)$} \label{sec33}

Now we comment on another example, in this case on how to obtain planar matrices of Feynman diagrams for $(4,7)$ using the second bootstrap again. The starting point are the 48 planar collections of (3,6), i.e.
$E_{3,6}=\{ {\cal C}_1,{\cal C}_2,\cdots, {\cal C}_{48} \}$,  which can be obtained from the first bootstrap.  The cyclic permutations \eqref{cycper}  give the set $E_{3,6}^{(1)}$, $E_{3,6}^{(2)}$, $\cdots$, $E_{3,6}^{(7)}$ with $E_{3,6}^{(7)}=E_{3,6}$.  Then
a planar matrix must take the form
\be\label{48448}
{\cal M}= [ {\cal C}_{i_1}^{(1)},  {\cal C}_{i_2}^{(2)}, \cdots, {\cal C}_{i_7}^{(7)} ] \,, \quad \text{with}~ 1\leq i_1,\cdots,i_7 \leq 48\,,
\ee 
where the $i$-th column belongs to the set $E_{3,6}^{(i)}$. Now we have 48 choices for the first column. Once again, we repeat the same procedure as before but now for 7 columns, and we get 693 planar matrices. One can check that all these 693 symmetric matrices satisfy the compatibility conditions \eqref{comp4}. Therefore, all of them  are  planar matrices and contribute to the biadjoint amplitude in $k=4$. 

After summing over every choice of the first column as well as every possible choice for the remaining columns allowed by the candidates at each step, we get  $693$ symmetric matrices in total,  which are much more than the $42$ initial planar matrices for $(4,7)$ used in the pruning-mutation procedure of section \ref{sec2}. There are $693$ planar collections in $(3,7)$ as well and how they are dual to  $693$ planar matrices is explained in section \ref{sec6}.

The ordering of collections in $E_{3,6}$ is not relevant as long as its cyclic permutations $E_{3,6}^{(1)},\cdots,E_{3,6}^{(7)}$ change covariantly. For the readers' convenience, we borrow Table 1 from \cite{Borges:2019csl} containing all $48$ collections and place it as table \ref{tb:n6coll} in appendix \ref{apB}. We adopt the same ordering notation as in \cite{Borges:2019csl} so that we can  present more details of the second bootstrap. 

A collection in table \ref{tb:n6coll} is given by 6 trees characterized by 6 numbers. For example, the first collection ${\cal C}_1$  expressed by $[4,4,4,3,3,3]$ means the collection given in figure \ref{fig:universe2}, where the ``middle leaves" are $4$, $4$, $4$, $3$, $3$ and $3$ respectively. Its cyclic permutations give ${\cal C}_1^{(1)},{\cal C}_1^{(2)},\cdots,{\cal C}_1^{(7)}$ with ${\cal C}_1^{(7)}={\cal C}_1$, which act as the first element of   $E_{3,6}^{(1)},E_{3,6}^{(2)},\cdots,E_{3,6}^{(7)}$ respectively.

If we choose ${\cal C}_1^{(1)}$ as the first column, it happens that from each $E_{3,6}^{(2)},\cdots,E_{3,6}^{(7)}$ there are 14 collections satisfying 
the symmetry requirement $T^{(1,i)}=T^{(i,1)}$. For example, for the second and third column,   their 14 possible choices of collections are 
\begin{align} 
&{\cal C}_1^{(2)},\,{\cal C}_{15}^{(2)},\,{\cal C}_{19}^{(2)},\,{\cal C}_{24}^{(2)},\,{\cal C}_{26}^{(2)},\,{\cal C}_{34}^{(2)},\,{\cal C}_{39}^{(2)},\,{\cal C}_{42}^{(2)},\,{\cal C}_{43}^{(2)},\,{\cal C}_{44}^{(2)},\,{\cal C}_{45}^{(2)},\,{\cal C}_{46}^{(2)},\,{\cal C}_{47}^{(2)},\,{\cal C}_{48}^{(2)}\,,
\label{693uu}
\\
&{\cal C}_3^{(3)},\,{\cal C}_{7}^{(3)},\,{\cal C}_{10}^{(3)},\,{\cal C}_{14}^{(3)},\,{\cal C}_{17}^{(3)},\,{\cal C}_{21}^{(3)},\,{\cal C}_{28}^{(3)},\,{\cal C}_{31}^{(3)},\,{\cal C}_{32}^{(3)},\,{\cal C}_{33}^{(3)},\,{\cal C}_{36}^{(3)},\,{\cal C}_{38}^{(3)},\,{\cal C}_{41}^{(3)},\,{\cal C}_{48}^{(3)}\,.
\label{693uu22}
\end{align} 
We see that the naive number of choices for the remaining 6 columns reduces from $48^6\sim 1\times 10^{10}$ down to $14^6\sim 8\times 10^{6}$.

Now  we can forget the first column and focus on the  $14^6$ candidates for the remaining 6 columns. If we choose ${\cal C}_1^{(2)}$ from 
\eqref{693uu} as the second column, we find 
only one collection ${\cal C}_{48}^{(3)}$ from \eqref{693uu22} that satisfies the requirement $T^{(2,3)}=T^{(3,2)}$. Similarly, we find that there is only one collection from 14 candidates for the remaining 4 columns satisfying the requirement $T^{(2,i)}=T^{(i,2)}$ as well for $i=4,5,6,7$. This time we see that the naive number of choices for the remaining five columns dramatically reduces from $14^5\sim 5\times 10^{5}$ to $1$. Hence the only choice that makes up a planar matrix of the form \eqref{48448} is 
 \be \label{315315}
 [ {\cal C}_{1}^{(1)},  {\cal C}_{1}^{(2)}, {\cal C}_{48}^{(3)}, {\cal C}_{41}^{(4)}, {\cal C}_{27}^{(5)}, {\cal C}_{18}^{(6)},  {\cal C}_{8}^{(7)} ]\,.
 \ee 

Had we chosen the remaining collections  ${\cal C}_{15}^{(2)},\cdots,  {\cal C}_{47}^{(2)} $ or ${\cal C}_{48}^{(2)}$ in \eqref{693uu} as the second column instead of ${\cal C}_{1}^{(2)}$, we would have found $  1$, $ 2$, $1$, $2$, $1$, $2$, $2$, $2$, $1$, $2$, $3$, $3$ and $9$ planar matrices respectively.  Thus there are 32 planar matrices in total with ${\cal C}_{1}^{(1)}$ as the first column.

  Similarly, we can get all of the planar matrices with  ${\cal C}_{2}^{(1)},\cdots, {\cal C}_{47}^{(1)}$ or ${\cal C}_{48}^{(1)}$  as the first column of the matrix.  By adding them up, including the 32 ones for ${\cal C}_{1}^{(1)}$, we obtain all the 693 planar matrices in $(4,7)$.

\section{Main Results} \label{sec4}

The main applications of the techniques introduced in this work are the computation of all the planar collections of Feynman diagrams for the cases $(3,6)$, $(3,7)$, $(3,8)$ and $(3,9)$. This is done using the first kind of combinatorial bootstrap. We have also computed all the planar matrices of Feynman diagrams for $(4,7)$, $(4,8)$ and $(4,9)$. 
%
In this section we try to give a self-contained presentation of these results 
in the form of {\tt *.m} files 
and 
a \textsc{Mathematica} notebook named {\tt Arrays\_of\_FDs.nb} to read them.

Note that {\tt .m} files can not only be opened by \textsc{Mathematica} but also by any \textsc{TextEdit}. When there is a file with name in the form {\tt *c.m}, it is compressed to reduce its size and can be uncompressed by  using the command ``Uncompress@Import@*c.m'' in any \textsc{Mathematica} notebook with correct directory.



\subsection{Planar Collections of Feynman Diagrams} \label{sec41}

We have placed the results of all planar collections of Feynman diagrams for $(3,6)$, $(3,7)$, $(3,8)$, and $(3,9)$ as {\tt .m} files.  The files {\tt col36.m} and {\tt col37.m} are human readable and {\tt col38c.m} and {\tt col39c.m}  will become readable after being uncompressed by \textsc{Mathematica}.

Let us illustrate the content of the files. For example, in the file {\tt col36.m}, there are 
all the $14$ planar collections of Feynman diagrams for $(3,6)$. Each of the $14$ planar collections is a set of six $5$-point Feynman diagrams. The way we choose to store the information is better explained with an example. The first collection presented in the file reads
\begin{align}\label{example1}
{\rm c}[\,\,&{\rm FD}[s[2,3],s[2,3,4], 1],\,{\rm FD}[s[1,3],s[1,3,4], 2],\,
{\rm FD}[s[1,2],s[1,2,4],3],\,
\nonumber
\\
&{\rm FD}[s[1,2],s[1,2,3],4],\,
{\rm FD}[s[1,2],s[1,2,3],5],\,
{\rm FD}[s[1,2],s[1,2,3], 6]\,\,]
    \,.
\end{align}
Here we have made use of the fact that a $5$-point Feynman diagram can be completely characterized by its two poles. Note that we have to assume that each Feynman diagram in the collection comes endowed with its valid kinematic data, i.e. for the $i^{\rm th}$-tree one uses  $s^{(i)}_{jk}$ with $j,k\in \{1,2,3,4,5,6\}\setminus i$ and satisfying momentum conservation in only the five particles present. In section \ref{sec6} we show that these $n$ copies of kinematic spaces are more than just a convenience and that each tree is indeed a fully fledged Feynman diagram. 

Let us continue with the example in \eqref{example1}. As mentioned above, the six 5-point Feynman diagrams have leaves $1,2,\cdots, 6$ pruned respectively.
Thus ${\rm FD}[s[2,3],s[2,3,4], 1]$ means a Feynman diagram with the poles $s_{2,3}^{(1)}$, $s_{2,3,4}^{(1)}=s^{(1)}_{5,6}$ and with the leaf 1 pruned. This collection happens to be the example shown in figure \ref{fig:universe2}.  

The contribution of a given planar collection of Feynman diagrams to a $k=3$ biadjoint amplitude can be computed by integrating over the space of compatible metrics, e.g. using the formula \eqref{intew}. For the cases $(3,6)$ and $(3,7)$ this is easily done and the results are in agreement with previous computations \cite{Cachazo:2019apa,Cachazo:2019ngv,Drummond:2019qjk}. However, we find that a straightforward application of such a method to $(3,8)$ and $(3,9)$ is not practical with modest computing resources. This is why we developed much more efficient but equivalent algorithms for computing such contributions to the amplitudes. In this section we simply present the data and postpone the explanation of the algorithm to the next section where we discuss evaluations as computing volumes.

For convenience,  we created a \textsc{Mathematica} notebook named {\tt Arrays\_of\_FDs.nb} that can read all the .m files results automatically.  Let us illustrate the use of {\tt Arrays\_of\_FDs.nb} with an example. The output of the command {\tt col}[3,8] gives all $13\, 612$ planar collections for $(3,8)$ taken from {\tt col38c.m}.

We postpone the presentation of the integrated amplitude $k=3$ biadjoint amplitude $m^{(3)}_8(\mathbb{I},\mathbb{I})$, using a more efficient method, to the next section.

\subsection{Planar Matrices of Feynman Diagrams} \label{sec42}
 
We express the planar matrices 
for $(4,7)$, $(4,8)$, and $(4,9)$ as a series of collections we already save and their cyclic permutations. More explicitly, recall that the second kind of bootstrap is based on the fact that each column of a planar matrix for $n$-points must be one of the planar collections for $(n-1)$-points with the labels chosen appropriately.  

For example, in the file mat47.m there are 
$693$ sets for $(4,7)$. The first one reads,
\begin{align}\label{example12}
\{1, 1, 48, 41, 27, 18, 8\}\,.
\end{align}
This notation might seem cryptic at first but it is actually both very efficient and simple. In order to gain familiarity with the notation note that this is exactly the  matrix presented in \eqref{315315} but it is given with a slightly less compact notation.

In practice, this means that we can get the planar matrix by picking out the $1^{\rm st},17^{\rm th},\cdots,$ $ 3^{\rm th}$ collections in col36.m and shifting their particle labels in a cyclic ordering $(1,2,3,4,5,6,7)$  by $1,2,3,4,5,6,7$ respectively.

For the user's convenience, we introduced a command ${\tt matrix}[k\_,n\_][set\_]$ in the \textsc{Mathematica} notebook {\tt Arrays\_of\_FDs.nb} to expand the compact notation and produce the explicit planar matrices. In addition, the command ${\tt mat}[k\_,n\_]$ returns the complete set of all planar matrices automatically. For example, the output of {\tt matrix}[4,7][\{1, 1, 48, 41, 27, 18, 8\}] gives the explicit expression of the planar matrix shown in \eqref{315315}.

All planar matrices for $(4,8)$ and $(4,9)$ collected in {\tt mat48c.m} and  {\tt mat49seedc.m} respectively which can be read by \textsc{Mathematica}. 
So one can use {\tt mat}[4,9] to produce all 30\, 659 \,424 (4,9) planar matrices.
The output of 
\be 
{\tt matrix}[4,9][\{98, 280, 5154, 7773, 9509, 11334, 10639, 9515, 5082\}]
\ee 
gives an explicit expression of a planar matrix for $(4,9)$, whose columns are from the $98^{\rm th}$, $280^{\rm th}$, $\cdots$, and $5082^{\rm nd}$ collections  of  $(3,8)$ presented in the file {\tt col38c.m} respectively.

With all planar collections of Feynman diagrams for (3,8) and all planar matrices for (4,8) and (4,9), one can in principle to produce any $(3,8),(4,8),(4,9)$ partial amplitudes with two arbitrary orderings according to \eqref{intew} and \eqref{intFTM}. With advanced techniques explained in the next section, their amplitudes can be computed much faster.  We provide a notebook file   named {\tt 38\_48\_amplitudes.nb} where one can set any  kinematics satisfying \eqref{1333} or \eqref{condS}  as input  to produce numeric amplitudes $m^{(3)}_8(\mathbb{I}|\mathbb{I})$ and $m^{(4)}_8(\mathbb{I}|\mathbb{I})$ in reasonable time. \footnote{It takes around one hour for $(4,8)$ on a laptop. Attentive readers are welcome to  modify the codes in the notebook file to get any analytic partial amplitudes for (3,8) and (4,8) but the computation would take a longer time.} 
We also provide another notebook file  named {\tt 49\_amplitudes.nb} whose computations have been finished and which contains several numeric diagonal amplitude  $m^{(4)}_9(\mathbb{I}|\mathbb{I})$ and several analytic off-diagonal amplitudes  $m^{(4)}_9(\mathbb{I}|\beta)$.  More details will be explained in the next section.



\section{Evaluation: Computing Volumes from Planar Collections and Matrices} \label{sec5}

In this section we introduce a geometric characterization of each collection as a facet of the full polytope. Motivated by the picture provided in \cite{Drummond:2019qjk}, we then realize the amplitude \eqref{intFTM} as the volume of the corresponding geometry. The characterization speeds up the evaluation, especially for the cases $(3,8)$,  $(4,8)$ and $(4,9)$, as it can be implemented and automatized via the software PolyMake. The full amplitudes are provided in ancillary files. 

Let us describe the procedure for the collection of figure \ref{bipcol}, where the compatibility equations fix:
 
\begin{eqnarray}
u =& x+w-z \\
v =& y+w-z
\end{eqnarray}
\begin{figure}[!htb]
\centering
\includegraphics[width=150mm]{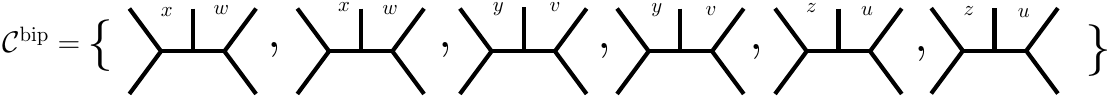}
\caption{Bipyramidal collection for $(3,6)$}
\label{bipcol}
\end{figure}
As explained in \cite{Cachazo:2019ngv,Borges:2019csl} this corresponds to a bipyramid facet in the tropical Grassmannian $(3,6)$. To make this identification precise, here we note that the six constraints on the collection metric, namely $x,y,z,w,u,v>0$ can be written as
\[
F\cdot Z_{i}>0
\]
where we define $F:=(x,y,z,w) \in \mathbb{R}^4$ and 
\begin{align}
Z_{1} & =(1,0,0,0)\,,\,Z_{2}=(0,1,0,0)\,,\,Z_{3}=(0,0,1,0)\,,\nonumber \\
Z_{4} & =(0,0,0,1)\,,\,Z_{5}=(0,1,-1,1)\,,\,Z_{6}=(1,0,-1,1)\,.\label{eq:6z}
\end{align}
Each $Z_{i}$ corresponds to a plane that passes through the origin. As will be further explored in \cite{Guevara:2020lek}, such planes correspond to degenerated collections and hence to boundaries of our geometry.
Indeed, the inequalities define a cone in $\mathbb{R}^{4}$

\[
\Delta=\{F\in\mathbb{R}^{4}|\,F\cdot Z_{i}>0\,,i=1,\ldots,6\}\,.
\]

A bounded three dimensional region is obtained by intersecting $\Delta$
with an affine plane, e.g. $F\cdot(1,1,1,1)=x+y+z+w=1$. This allows
to draw a three dimensional bipyramid as in figure \ref{bipbip}, where we label the faces by the corresponding $Z_i$. For our purposes
there is a preferred affine plane $T$ given by

\begin{figure}[!htb]
\centering
\includegraphics[width=120mm]{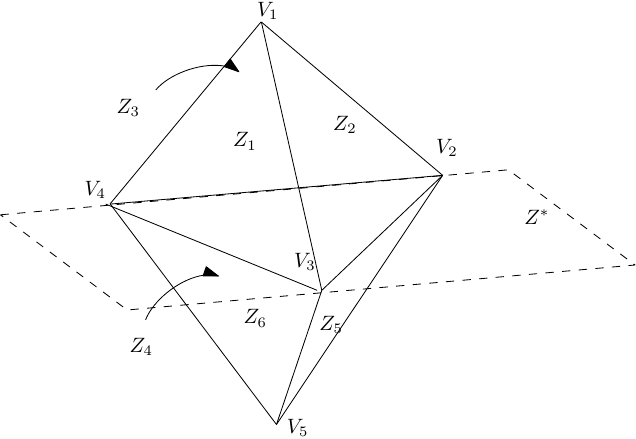}
\caption{Bipyramid projected into a three dimensional slice, for instance by imposing $F\cdot T=1$. Three or more planes $Z_i$ intersect at vertices $V_i$. We have also depicted the auxiliary plane $Z^*$ passing through $V_2,V_3,V_4$.}
\label{bipbip}
\end{figure}
\begin{equation}\label{eq:FT1}
\mathcal{F}(\mathcal{C}^{\rm bip})=F\cdot\underbrace{(R,t_{1234},t_{3456},t_{5612}-R)}_{T}=1\,.
\end{equation}
\begin{figure}[!htb]
\centering
\includegraphics[width=150mm]{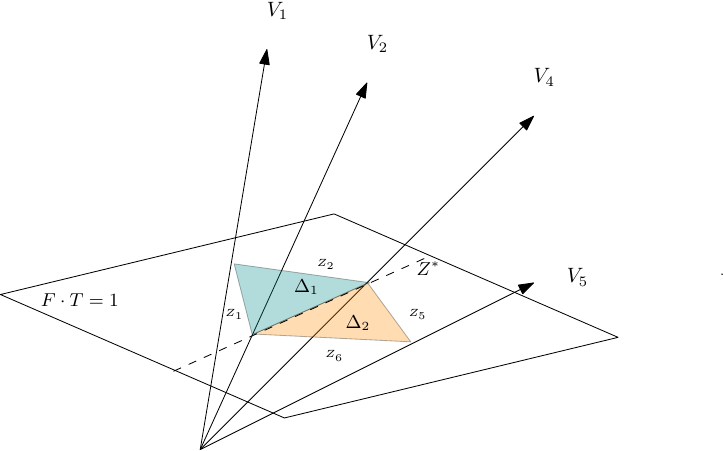}
\caption{A cartoon for the bipyramid as a cone in $\mathbb{R}^4$, bounded by planes $Z_i$. The vertices $V_i$ correspond to rays. }
\label{bipbip2}
\end{figure}

We now show that the volume of the corresponding three dimensional
bipyramid is precisely the amplitude associated to such collection,
as defined by the Laplace integral formula $\eqref{intew}$
\[
\int_{\Delta}d^{4}F\,e^{-\mathcal{F}(\mathcal{C}^{\rm bip})}\, .
\]
The statement is a particular case of a more general fact known as the Duistermaat-Heckman formula\footnote{ which has recently been applied in \cite{Frost:2018djd} for the context of string integrals.}.
To derive it, we note that the bipyramid $\Delta$ (projected into
$F\cdot T=1$) can be triangulated by two simplices, $\Delta=\Delta_{1}\cup\Delta_{2}$
by inserting the auxiliary plane, also depicted in figure \ref{bipbip}
\[
F\cdot Z^{*}:=y-z=0
\]
corresponding to the base of the bipyramid. In practice, $Z^{*}=(0,1,-1,0)$
can be easily found from the input (\ref{eq:6z}) using the {\tt TRIANGULATION}
function of \textsc{PolyMake}. The two simplices are defined by the cones
\begin{align*}
\Delta_{1} & =\{F\in\mathbb{R}^{4}|\,F\cdot Z_{1}>0,F\cdot Z_{3}>0,F\cdot Z_{5}>0,F\cdot Z^{*}>0\}\\
\Delta_{2} & =\{F\in\mathbb{R}^{4}|\,F\cdot Z_{2}>0,F\cdot Z_{4}>0,F\cdot Z_{6}>0,F\cdot Z^{*}<0\}
\end{align*}
after their projection to three dimensions via $F\cdot T=1$, see figure \ref{bipbip2}. Now,
\[
\int_{\Delta}d^{4}F\,e^{-\mathcal{F}(\mathcal{C}^{\rm bip})}=\int_{\Delta_{1}}d^{4}Fe^{-F\cdot T}+\int_{\Delta_{2}}d^{4}Fe^{-F\cdot T}
\]
and it suffices to show that each integral computes the volume of
the corresponding simplex. For this we introduce the set of vertices
for each simplex. For $\Delta_{1}$ we have the vertices (rays) $\{V_{1},V_{2},V_{3},V_{4}\}$
defined by
\begin{align}\label{dualsys}
V_{1}\cdot Z_{1} & =V_{1}\cdot Z_{2}=V_{1}\cdot Z_{3}=0\\
V_{2}\cdot Z_{2} & =V_{2}\cdot Z_{3}=V_{2}\cdot Z^{*}=0\\
V_{3}\cdot Z_{1} & =V_{3}\cdot Z_{2}=V_{3}\cdot Z^{*}=0\\
V_{4}\cdot Z_{3} & =V_{4}\cdot Z_{1}=V_{4}\cdot Z^{*}=0.
\end{align}
We then have
\[
F\in\Delta_{1}\iff F=\sum_{i=1}^{4}\beta^{i}V_{i}\,,\quad\beta^{i}>0
\]
and simple algebra leads to 
\begin{equation}
\int_{\Delta_{1}}d^{4}Fe^{-F\cdot T}=\frac{| \langle V_{1},V_{2},V_{3},V_{4}\rangle |}{(V_{1}\cdot T)(V_{2}\cdot T)(V_{3}\cdot T)(V_{4}\cdot T)}
,\label{eq:vt}
\end{equation}
where $\langle V_{1},V_{2},V_{3},V_{4}\rangle = \epsilon^{IJKL} V_{1I}V_{2J}V_{3K}V_{4L}$. The absolute value is conventional as it corresponds to a choice of orientation.
This is precisely the volume spanned by the vectors $\hat{V}_{i}=\frac{V_{i}}{V_{i}\cdot T}$
that also belong to the plane $F\cdot T=1$, i.e. they satisfy $\hat{V}_{i}\cdot T=1$. This thus proves that the
amplitude is given by the volume of the cone $\Delta$ projected into
$F\cdot T=1$.

As will be further explored in \cite{Guevara:2020lek}, we note that the vertices $V_{i}$ are nothing but the poles emerging in the associated amplitude. This realizes the geometrical intuition provided in \cite{Cachazo:2019apa} that the bipyramid is formed by the poles $\{R,\tilde{R},t_{1234},t_{3456},t_{5612}\}$, where $R,\tilde{R}$ correspond to the apices. Indeed, recalling the definitions
\begin{eqnarray}
t_{abcd}:=&{\bf s}_{abc}+{\bf s}_{abd}+{\bf s}_{acd}+{\bf s}_{bcd}\\
R_{ab,cd,ef}:=& t_{abcd}+{\bf s}_{cde}+{\bf s}_{cdf} \,,
\end{eqnarray}
with $R=R_{12,34,56}\,,\,\tilde{R}=R_{34,12,56}$ we find
\[
\frac{|\langle V_{1},V_{2},V_{3},V_{4}\rangle |}{(V_{1}\cdot T)(V_{2}\cdot T)(V_{3}\cdot T)(V_{4}\cdot T)}=\frac{1}{t_{1234}t_{3456}t_{5612}\tilde{R}}
\]
which corresponds to vertices of the upper half of the bipyramid in
figure \ref{bipbip}. Finally, one can check that the vertices of $\Delta_{2}$ are
given by $\{V_{2},V_{3},V_{4},V_{5}\}$ where the new ray is defined
by $V_{5}\cdot Z_{4}=V_{5}\cdot Z_{5}=V_{5}\cdot Z_{6}=0$. This means
that
\begin{align*}
\int_{\Delta_{2}}d^{4}Fe^{-F\cdot T} & =\frac{|\langle V_{1},V_{2},V_{3},V_{5}\rangle |}{(V_{1}\cdot T)(V_{2}\cdot T)(V_{3}\cdot T)(V_{5}\cdot T)}\\
 & =\frac{1}{t_{1234}t_{3456}t_{5612}R}
\end{align*}
in agreement with e.g. \cite{Cachazo:2019apa}. We  put the details on how to realize these calculations in \textsc{PolyMake} in appendix \ref{polymake}.

For the cases $(3,8)$ and $(4,8)$, this method turns out to be necessary. We have used \textsc{PolyMake} to triangulate the cone $\Delta$ for every one of their planar collections or matrices and stored the results in the ancillary files. The remaining procedure to produce the full numeric rational integrated amplitudes $m^{(k)}_8(\mathbb{I},\mathbb{I})$ for any given kinematics is  implemented in the \textsc{Mathematica} notebook {\tt 38\_48\_amplitudes.nb}. The (4,8) amplitude $m^{(4)}_8(\mathbb{I},\mathbb{I})$ obtained this way was later found to coincide with that of \cite{He:2020ray} using Arkani-Hamed-Bai-He-Yan construction \cite{Arkani-Hamed:2017tmz} in the context of Grassmannian stringy integrals \cite{Arkani-Hamed:2019mrd}, which is a strong consistency check for both sides.  




We continued to apply this method for the case $(4,9)$,
which has more than $3\times 10^7$ planar matrices. The number of simplices obtained by the
{\tt TRIANGULATION}
function of \textsc{PolyMake} is even much bigger:
$4\,797\,131\,092$. 
The files containing information of vertices,  facets as well as replacement rules  are too large to attach,  so we just include 4 numeric results of the full amplitudes $m^{(4)}_9(\mathbb{I},\mathbb{I})$ for 4 sets of given kinematics data and 5 analytic results of off-diagonal  amplitudes $m^{(4)}_9(\mathbb{I},\beta)$   in an auxiliary  \textsc{Mathematica} notebook {\tt 49\_amplitudes.nb}.

We close this section by emphasizing that the vertices $V_{i}$ are
dual to the planes $Z_{i}$, which establishes a connection with the dual
polytope. Explicitly, from the incidence relations \eqref{dualsys} we can use $V_{1}^{I}=\epsilon^{IJKL}Z_{1J}Z_{2K}Z_{3L}$
(and analogously for all vertices) to rewrite expression (\ref{eq:vt}) as
\begin{equation}
\int_{\Delta_{1}}d^{4}Fe^{-F\cdot T}=\frac{\langle Z_{1},Z_{2},Z_{3},Z^{*}\rangle^{3}}{\langle T,Z_{2},Z_{3},Z^{*}\rangle\langle Z_{1},T,Z_{3},Z^{*}\rangle\langle Z_{1},Z_{2},T,Z^{*}\rangle\langle Z_{1},Z_{2},Z_{3},T\rangle} \,.
\end{equation}
This is the canonical form of the dual simplex \cite{Arkani-Hamed:2017tmz}, spanned
by the rays $\{Z_{1},Z_{3},Z_{5},Z^{*}\}$. This perspective has been
explored in detail in \cite{Arkani-Hamed:2017mur}.

\section{Higher $k$ or Planar Arrays of Feynman Diagrams and Duality} \label{sec6}

Planar collections can be thought of as one-dimensional arrays while planar matrices as two-dimensional arrays of Feynman diagrams satisfying certain conditions. It is natural to propose that the computation of generalized biadjoint amplitudes for any $(k,n)$ can be done using $k-2$ dimensional arrays of Feynman diagrams. 

\begin{defn}
A {\it planar array of Feynman diagrams} is a $(k-2)$-dimensional array ${\cal A}$ with dimensions of size $n$. The array has as component ${\cal A}_{i_1,i_2,\ldots ,i_{k-2}}$ a metric tree with leaves in the set $\{1,2,\ldots ,n\}\setminus \{i_1,i_2,\ldots ,i_{k-2}\}$ and which is planar with respect to the ordering $(1,2,\cdots, \slashed{i}_1,\cdots ,\slashed{i}_2,\cdots ,\slashed{i}_{k-2},\cdots, n)$ satisfying the following conditions
\begin{itemize}
    \item  Diagonal entries are the empty tree ${\cal A}_{\ldots ,i,\ldots, i,\ldots }=\emptyset$.
    \item  Compatibility: $d_{i_1i_2}^{(i_3,\ldots ,i_{k})}$ is completely symmetric in all $k$ indices. 
\end{itemize}
\end{defn}

A point which has not been explained so far is why each element in a collection, matrix or in general an array is called a Feynman diagram. We now turn to this point. The contribution to an amplitude of a given planar array of Feynman diagrams is computed using the function
\be
{\cal F}({\cal A}) = \sum_{i_1,i_2,\ldots ,i_{k}} \sfs_{i_1i_2\cdots i_k} d_{i_1i_2}^{(i_3,\ldots ,i_{k})}.
\ee
For $k=2$ it is easy to show that this function is independent of the external edge's lengths by writing $d_{ij}=e_i+e_j+d_{ij}^{\rm internal}$ and using momentum conservation. For $k=3$ it was noted in \cite{Borges:2019csl} that the function ${\cal F}({\cal C})$ can also be written in a way that it is also independent of the external edges. However, the proof is not as straightforward. In order to easily see this property all we have to do is to treat each tree in the array as a true Feynman diagram with its own kinematics. 

The element in the array ${\cal A}_{i_1,i_2\ldots ,i_{k-2}}$ is an $(n-k+2)$-particle Feynman diagram with particle labels $\{ 1,2,\ldots ,n\}\setminus \{i_1,i_2\ldots ,i_{k-2}\}$. As such, one has to associate the proper kinematic invariants satisfying momentum conservation. Let us introduce the notation ${\cal I}:= \{i_1,i_2\ldots ,i_{k-2}\}$ and $\overline{\cal I}$ for its complement. Then we have
\be
s^{({\cal I})}_{ii} = 0 , \quad \sum_{j\in \overline{\cal I}} s^{({\cal I})}_{ij} = 0 \quad \forall i\in \overline{\cal I}.
\ee
Using these kinematic invariants one can parametrize the $(k,n)$ invariants as 
\be
\sfs_{i_1i_2\ldots i_k}:= \sum_{{\cal I}\cup \{j_1,j_2\}=\{i_1,i_2,\ldots ,i_k\}} s_{j_1,j_2}^{({\cal I})},
\ee
where the sum is over all possible ways of decomposing $\{i_1,i_2,\ldots ,i_k\}$ into two sets of $k-2$ and $2$ elements respectively. To illustrate the notation consider $k=3$ where
\be\label{clu}
\sfs_{ijk} := s^{(i)}_{jk} + s^{(j)}_{ki} + s^{(k)}_{ij}.
\ee
This parametrization is very redundant but as any good redundancy it makes at least one property of the relevant object manifest. In this case it is the independence of the external edges of ${\cal F}({\cal A})$. Let us continue with the $k=3$ case in order not to clutter the notations but the general $k$ version is clear.

Using \eqref{clu} one can write
\be
{\cal F}({\cal C}) = \sum_{i,j,k}\sfs_{ijk} d^{(i)}_{jk}
\ee
as 
\be\label{polu}
{\cal F}({\cal C}) = \frac{1}{3}\sum_{i=1}^n \sum_{j,k} s^{(i)}_{jk} d^{(i)}_{jk}.
\ee
Here we used the symmetry property of $d^{(i)}_{jk}$ to identify all three terms coming from using \eqref{clu}. The new form is nothing but a sum over the functions $F(T)$ for each of the trees in the collection and therefore it is clearly independent of the external edges as expected. 

Let us now discuss how the two kinds of combinatorial bootstraps work for general planar arrays of Feynman diagrams.

The first kind of combinatorial bootstrap, which we called pruning-mutating in section \ref{sec2}, is simply the process of producing $C_{n-2}$ initial arrays of Feynman diagrams by starting with any given $n$-point planar Feynman diagram and pruning $k-2$ of its leaves in all possible ways to end up with an array of $(n-k+2)$-point Feynman diagrams. Starting from these initial planar arrays, one computes the corresponding metrics and find all their possible degenerations. Approaching each degeneration one at a time one can produce a new planar array by resolving the degeneration only in the other planar possible way. Repeating the mutation procedure on all new arrays generated until no new array is found leads to the full set of planar arrays of Feynman diagrams.

The second kind of combinatorial bootstrap, as described in section \ref{sec3} for planar matrices, is the idea that the compatibility conditions on the metrics of the trees making the array force it to be completely symmetric. This simple observation together with the fact that any subarray where some indices are fixed must in itself be a valid planar array of Feynman diagrams for some smaller values of $k$ and $n$ gives strong constrains on the objects.

As it should be clear from the examples presented in section \ref{sec3}, the second bootstrap approach is more efficient than the first one if all planar arrays in $(k-1,n-1)$ are known. This means that one could start with $(3,6)$ and produce the following sequence:
\be
(3,6)\to (4,7) \to (5,8) \to (6,9) \to (7,10) \ldots 
\ee
The reason to consider this sequence is that after obtaining all its elements, one can construct all $(3,n)$ planar collections via duality. Of course, in order to do that efficiently one has to find a combinatorial way of performing the duality directly at the level of the graphs.

\subsection{Combinatorial Duality} \label{sec61}

Let us start by defining some notation that will be used in this section. We will denote $T_n$ as a planar tree in $(2,n)$, ${\cal C}_n$ as a planar collection in $(3,n)$ and ${\cal M}_n$ as a planar matrix in $(4,n)$. In general, a planar array ${\cal A}_n$ will correspond to a $(k-2)$-dimensional array with dimensions of size $n$. 
In order to understand how the combinatorial duality works, we also introduce the concept of combinatorial soft limit. The combinatorial soft limit for particle $i$ applied to ${\cal A}_n$ is defined by removing the $i$-th $(k-3)$-dimensional array from ${\cal A}_n$, as well as removing the $i$-th label to the remaining $(k-3)$-dimensional arrays. Therefore, the combinatorial soft limit takes us from $(k,n)\to(k,n-1)$. 

It is useful to introduce a superscript ${\cal A}_n^{(i)}$ to refer to an array obtained from a combinatorial soft limit for particle $i$. Notice that this notation slightly differs from the one we use in earlier sections.

With this in hand, we can define the particular duality $(2,n)\sim(n-2,n)$ as taking the tree $T_n$ of $(2,n)$ and applying the combinatorial soft limit to particles $i_1,...,i_{n-2}$ in order to remove $n-2$ leaves to obtain the corresponding dual ${\cal A}^{(i_1,...,i_{n-2})}_{n-2}$ of $(n-2,n)$.

For general $(k,n)$ with $k<n-2$ the duality works as follows. Consider a $(k-2)$-dimensional planar array ${\cal A}_n$ of $(k,n)$. By taking the combinatorial soft limit for particle $i$, we end up with ${\cal A}_{n-1}^{(i)}$ of $(k,n-1)$. Apply this step $n$ times for all the $n$ particles. Now dualize each of the $n$ objects to directly obtain the corresponding $(n-k-2)$-dimensional array ${\cal A}_n$ of $(n-k,n)$, hence the duality. The combinatorial duality can be simply summarized as

\be\label{wer}
(k,n)\xrightarrow[\text{limit}]{\text{soft}} n\times (k,n-1)\xrightarrow[\text{dualize}]{\text{}}(n-k,n)
\ee

\subsubsection{Illustrative example: $(3,7)\sim(4,7)$} \label{sec611}

Now we proceed to show the explicit example for $(3,7)\sim(4,7)$. The combinatorial soft limit for particle $i$ applied to a planar collection ${\cal C}_n$\ corresponds to removing the $i$-th tree in ${\cal C}_n$ as well as removing the $i$-th label in all the rest of the trees in ${\cal C}_n$. Therefore, it implies ${\cal C}_{n}\to{\cal C}_{n-1}^{(i)}$. Similarly, the combinatorial soft limit for particle $i$ applied to a planar matrix ${\cal M}_n$ corresponds to removing the $i$-th column and row in ${\cal M}_n$ as well as removing the $i$-th label in all the rest of the trees in ${\cal M}_n$. Therefore, it implies ${\cal M}_{n}\to{\cal M}_{n-1}^{(i)}$.

Before studying $(3,7)\sim(4,7)$ let us consider $(3,6)\sim(3,6)$ as this will be useful below. Using \eqref{wer} we can see %
\be
(3,6)\xrightarrow[\text{limit}]{\text{soft}} 6\times (3,5)\xrightarrow[\text{dualize}]{\text{}}(3,6)
\ee
where the duality $(2,5)\sim(3,5)$ is one of the most basic ones which was used as a motivation for introducing planar collections in \cite{Borges:2019csl}.

Now consider one planar collection ${\cal C}_{n=7}$ of $(3,7)$. By taking the combinatorial soft limit for particle $i$, we end up with a collection ${\cal C}_{n=6}^{(i)}$ in $(3,6)$. Given that $(3,6)\sim(3,6)$, this collection is dual to another collection $\tilde{{\cal C}}_{n=6}^{(i)}$, which corresponds to the $i$-th column of a planar matrix ${\cal M}_{n=7}$ in $(4,7)$. This means that if we now take the combinatorial soft limit for the other particles in ${\cal C}_{n=7}$ we end up with the full matrix ${\cal M}_{n=7}$. Hence, the objects ${\cal C}_{n=7}$ and ${\cal M}_{n=7}$ are dual.

We can also see this by following an equivalent path. Consider one planar matrix ${\cal M}_{n=7}$ of $(4,7)$. By taking the combinatorial soft limit for particle $i$, we end up with a planar matrix ${\cal M}_{n=6}^{(i)}$ of $(4,6)$. Notice that this matrix is dual to the planar tree $T_{n=6}^{(i)}$ of $(2,6)$ which is an element of ${\cal C}_{n=7}$, so by repeating the soft limit for all the remaining particles we end up with the full ${\cal C}_{n=7}$ of $(3,7)$.


\section{Future Directions} \label{sec7}

Generalized biadjoint amplitudes as defined by a CHY integral over the configuration space of $n$ points in $\mathbb{CP}^{k-2}$ with $k>2$ provide a very natural step beyond standard quantum field theory \cite{Cachazo:2019ngv}. An equally natural generalization of quantum field theory amplitudes is obtained by first identifying standard Feynman diagrams with metric trees and their connection to ${\rm Trop}\, G(2,n)$. In \cite{herrmann2009draw}, arrangements of metric trees where introduced as objects corresponding to ${\rm Trop}\, G(3,n)$. A special class of such arrangement, called planar collections of Feynman diagrams were then proposed as the simplest generalization of Feynman diagrams in \cite{Borges:2019csl}. In this work we introduced $(k-2)$-dimensional planar arrays of Feynman diagrams as the all $k$ generalization. One of the most exciting phenomena is that these $(k-2)$-dimensional arrays define generalized biadjoint amplitudes. 

The fact that both definitions of generalized amplitudes, either as a CHY integral or as a sum over arrays, coincide is non-trivial. In fact, a rigorous proof of this connection, perhaps along the lines of the proof for $k=2$ given by Dolan and Goddard \cite{Dolan:2013isa,Dolan:2014ega}, is a pressing problem. One possible direction is hinted by the observations made in section \ref{sec6}, where each Feynman diagram in an array was given its own kinematics along with its own metric. Of course, what makes the planar array interesting is the compatibility conditions for the metrics of the various trees in the array. Understanding the physical meaning of such conditions is also a very important problem. However, this already gives a hint as to what to do with the CHY integral. Borrowing the $k=3$ example in section \ref{sec6}, the kinematics is parameterized as $\sfs_{ijk} =s^{(i)}_{jk}+s^{(j)}_{ik}+s^{(k)}_{ij}$. Recall that in the CHY formulation on $\mathbb{CP}^{2}$ introduced in \cite{Cachazo:2019ngv} one starts with a potential function
\be
{\cal S}^{(3)}_n := \sum_{i,j,k} \sfs_{ijk} {\rm log}\, |ijk|
\ee
with $|ijk|$ Pl\"ucker coordinates in $G(3,n)$. Even though the object is antisymmetric in all its indices, only its absolute value is relevant in ${\cal S}^{(3)}_n$ since the way it enters in the CHY formula is only via the equations needed for the computation of its critical points. This means that $|ijk|$ can be used to define ``effective'' $k=2$  Pl\"ucker coordinates of the form $|jk|^{(i)}:= |ijk|$. In other words, once a label is selected, say $i$, then all other points in $\mathbb{CP}^{2}$ can be projected onto a $\mathbb{CP}^{1}$ using the $i^{\rm th}$-point. This means that the potential  ${\cal S}^{(3)}_n$  can be written as a sum over $n$ $k=2$ potentials in a way completely analogous to ${\cal F}(\cal C)$ in \eqref{polu}, i.e.
\be
{\cal S}^{(3)}_n = \frac{1}{3}\sum_{i=1}^n\, \sum_{j,k} s_{jk}^{(i)}\, |jk|^{(i)}.
\ee
One can then write a $k=3$ CHY formula as a product over $n$ $k=2$ CHY integrals linked by the ``compatibility constraints'' imposing that the absolute value of $|jk|^{(i)}$, $|ij|^{(k)}$, $|ik|^{(j)}$ all be equal. Owing to the techniques developed in \cite{Cachazo:2019ble,Agostini:2021rze,Sturmfels:2020mpv,Cachazo:2020uup,Cachazo:2020wgu}, many non-trivial 
$k=3,4$ CHY formulas have been verified to match the partial amplitudes obtained by using planar collections or matrices of Feynman diagrams, but general analysis just as what we present is still needed to understand the most general cases.

As explored in 
\cite{Guevara:2020lek}, by forcing a given planar array of Feynman diagrams to explore its degenerations of highest codimension one finds planar arrays of degenerate Feynman diagrams which encode the information of the poles  of the contributions of this planar array to the amplitudes. 
How to connect CEGM amplitudes to 
cluster algebras  
\cite{SpeyerW,Drummond:2019qjk,Arkani-Hamed:2020tuz,Arkani-Hamed:2019plo,He:2021zuv,Drummond:2020kqg,Gates:2021tnp,Henke:2021ity},
positroid subdivisions  \cite{Lukowski:2020dpn,Early:2019eun,Early:2019zyi}, stringy integrals \cite{Arkani-Hamed:2019mrd,He:2020ray} or even the symbol alphabet of $\mathcal{N}=4$ SYM \cite{Henke:2019hve,Arkani-Hamed:2019rds} especially via their poles also deserves further exploring.

\vskip0.1in

{\bf Note Added:}

While the first version of this manuscript was being prepared for submission, the works \cite{Drummond:2019cxm,Arkani-Hamed:2019rds,Henke:2019hve,Arkani-Hamed:2019mrd} appeared which have some overlap with our results, especially in $(4,8)$.

While the third version of this paper was being prepared, some new results on local planarity appeared \cite{Cachazo:2022pnx,Cachazo:2023ltw}.
In this paper, we have studied an array of Feynman diagrams  consistent with a global notion of  planarity, 
 which is closely related to the positive part of the tropical Grassmannian, ${\rm Trop}\, G^+(k,n)$.
The notion of generalized Feynman diagrams was first introduced in \cite{Cachazo:2019ngv} and refined in \cite{Cachazo:2022pnx,Cachazo:2023ltw} to formalize the notion of local planarity or 
generalized color ordering.
Generalized Feynman diagrams are expected to relate to the whole the tropical Grassmannian, ${\rm Trop}\, G(k,n)$, and it would be interesting to see how many properties of planar arrays of Feynman diagrams still hold there.


\section*{Acknowledgements}

We would like to thank Nick Early and Song He for useful discussions. Research at Perimeter Institute is supported in part by the Government of Canada through the Department of Innovation, Science and Economic Development Canada and by the Province of Ontario through the Ministry of Economic Development, Job Creation and Trade.

\appendix

\section{Proof of One-to-one Map of a Binary Tree and its Metric} \label{apA}

\begin{lem}
\label{tatb}
Given that two cubic trees $T_A$ and $T_B$ have the same valid non-degenerate metric $d_{ij}$, then $T_A=T_B$.
\end{lem}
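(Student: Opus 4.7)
The plan is to prove the lemma by induction on the number of leaves $n$, extracting the combinatorial topology from the metric $d_{ij}$ first and then the edge lengths. Throughout, ``non-degenerate'' means all internal edges of $T_A$ and $T_B$ have strictly positive length.

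For the base case $n=3$, a cubic tree is forced to be a star with a single internal vertex joined to the three leaves, and the three edge lengths are recovered from the linear system $e_i = \tfrac{1}{2}(d_{ij}+d_{ik}-d_{jk})$. So both topology and metric are uniquely determined.

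For the inductive step, I would exploit the four-point condition for tree metrics: for any four leaves $i,j,k,l$, the three sums $d_{ij}+d_{kl}$, $d_{ik}+d_{jl}$, $d_{il}+d_{jk}$ have the property that the two largest coincide and strictly exceed the smallest (where strictness uses non-degeneracy of internal edges). The unique smallest sum thus encodes the quartet split ``$\{i,j\}|\{k,l\}$'' induced on those four leaves. Since $T_A$ and $T_B$ share the same metric, they induce the same quartet split on every 4-subset of leaves.

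I would then invoke the classical fact that the full quartet system determines a cubic tree uniquely. Operationally, this can be implemented by cherry-pruning: a pair $\{a,b\}$ is a cherry iff for every other pair $\{c,d\}$ the quartet split is $\{a,b\}|\{c,d\}$; such a cherry exists in any cubic tree with $n\geq 3$ leaves. Once identified, one contracts the cherry to a single leaf $v$ with reduced distances $\tilde d_{v,k} := \tfrac{1}{2}(d_{a,k}+d_{b,k}-d_{a,b})$, producing a cubic tree on $n-1$ leaves whose metric is $\tilde d$ and which remains non-degenerate (the new internal edges still have positive length because $T_A,T_B$ did). The same cherry and same reduced metric are produced from $T_A$ and from $T_B$, so by induction the two reduced trees coincide, and reattaching $\{a,b\}$ at $v$ with lengths $\tfrac{1}{2}(d_{a,b}+d_{a,k}-d_{b,k})$ and $\tfrac{1}{2}(d_{a,b}+d_{b,k}-d_{a,k})$ (for any reference leaf $k$) recovers $T_A=T_B$ topology and metric.

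The main obstacle is verifying that the cherry is unambiguously detected from $d$ alone (so that the same cherry is picked in both $T_A$ and $T_B$) and that the reduced metric $\tilde d$ still satisfies the four-point condition with strict inequality, so that the induction hypothesis applies. Both points follow from a direct case analysis on the position of the contracted vertex relative to arbitrary other leaves, using only additivity of distances along paths and positivity of internal edges.
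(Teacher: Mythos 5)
Your proof is correct and follows essentially the same route as the paper: induction on the number of leaves, detecting a cherry purely from the metric, contracting it to a new leaf with the reduced metric, applying the induction hypothesis, and then recovering the two cherry edge lengths. Your quartet-split criterion for a cherry (via the four-point condition) is equivalent to the paper's criterion that $d_{il}-d_{jl}$ be independent of $l$, and you are in fact slightly more explicit than the paper about strictness of the four-point inequality and non-degeneracy of the reduced tree.
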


\begin{proof}
We are going to provide a proof by induction. First, consider the base case where $T_A$ and $T_B$ are $3$-point trees. It is clear that there exists a unique solution to $d_{12}=e_1+e_2$, $d_{13}=e_1+e_3$ and $d_{23}=e_2+e_3$. Since $T_A$ and $T_B$ have the same non-degenerate metric, the lengths $e_i^{(A)}=e_i^{(B)}$ must be identical, thus $T_A=T_B$. 

Now let us assume that the lemma is true for all $(n-1)$-point cubic metric trees and consider two $n$-point cubic trees $T_A$ and $T_B$ that have the same non-degenerate metric $d_{ij}$.
Next let us find leaves $i$ and $j$ such that $d_{il}-d_{jl}$ is $l$ independent. Such pair must exist because the condition is true for any pair of leaves which belong to the same ``cherry'' as shown in the diagrams in figure \ref{proof1}. Moreover, only leaves in cherries satisfy this condition in a cubic non-degenerate tree.

\begin{figure}[!htb]
\centering
\includegraphics[width=130mm]{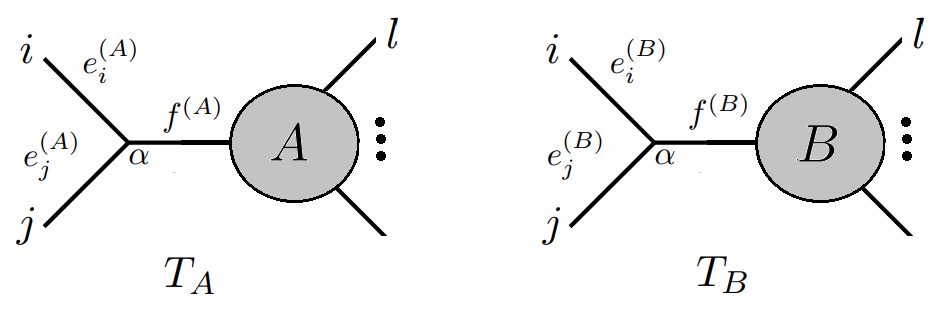}
\caption{Two $n$-point cubic trees with pairs $i$ and $j$ joined by the vertex $\alpha$.}
\label{proof1}
\end{figure}

Removing the cherries from both trees and introducing a new leaf $\alpha$ one can define a metric for the the $(n-1)$-point cubic trees in figure \ref{proof2}, whose leaves are given by $\left(\{1,2,\ldots ,n\}\setminus \{i,j\}\right) \cup\{\alpha\}$. 

\begin{figure}[!htb]
\centering
\includegraphics[width=130mm]{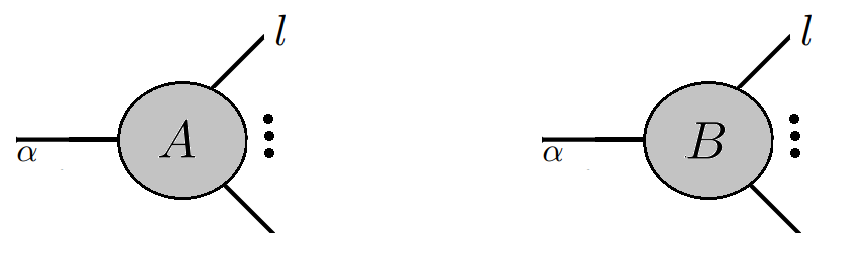}
\caption{Two $(n-1)$-point cubic trees with external edges $e_{\alpha}^{(A)}=f^{(A)}$ and $e_{\alpha}^{(B)}=f^{(B)}$ such that $d_{kl}^{(A)}=d_{kl}^{(B)}$.}
\label{proof2}
\end{figure}

Such a metric is defined in terms of the metric of the parent trees as follows. $d^{(A)}_{kl}=d_{kl}$ if $k,l\neq \alpha$ and $d^{(A)}_{k\alpha}=d_{ki}-e^{(A)}_i$. Likewise $d^{(B)}_{kl}=d_{kl}$ if $k,l\neq \alpha$ and $d^{(B)}_{k\alpha}=d_{ki}-e^{(B)}_i$. It is easy to see from the figure that the two metrics are identical, i.e. $d_{kl}^{(A)}=d_{kl}^{(B)}$. 

Using the induction hypothesis, the two metric trees in figure \ref{proof2} must be the same. In order to complete the proof all we need is to show that  $e_i^{(A)}=e_i^{(B)}$ and  $e_j^{(A)}=e_j^{(B)}$. The fact that $d_{il}=e_i^{(A)}+d_{\alpha l}^{(A)}=e_i^{(B)}+d_{\alpha l}^{(B)}$ immediately implies $e_i^{(A)}=e_i^{(B)}$, hence $T_A=T_B$.
\end{proof}

\section{All Planar Collections of Feynman Diagrams for $(3,6)$ } \label{apB}

Below we reproduce for the reader's convenience Table 1 of \cite{Borges:2019csl} which contains all $48$ planar collections of Feynman diagrams for $(3,6)$.  
The notation in this case is very compact and requires some explanation. Each collection for $(3,6)$ is made out of $5$-point trees. The tree in the $i^{\rm th}$-position must be planar with respect to the ordering $(1,2,\ldots ,\slashed{i},\ldots ,n)$. There is a single topology of five-point trees, i.e. a caterpillar tree with two cherries and one leg. Therefore it is possible to specify it by giving the label of the leaf attached to the leg. Using this, each collection becomes a one-dimensional array of six numbers.  

\begin{table}[H]
\centering
\begin{tabular}{|c|c||c|c|}
\hline 
\multicolumn{4}{|c|}{Planar collections of trees in $k=3$ and $n=6$} \\ 
\hline 
Collection & Trees & Collection & Trees \\ 
\hline 
$\mathcal{C}_{1}$ & $\qty[4,4,4,3,3,3]$ & $\mathcal{C}_{25}$ & $\qty[6,6,6,5,4,1]$ \\ 
\hline 
$\mathcal{C}_{2}$ & $\qty[4,4,4,3,6,5]$ & $\mathcal{C}_{26}$ & $\qty[6,6,6,6,6,3]$ \\ 
\hline 
$\mathcal{C}_{3}$ & $\qty[4,4,4,3,2,2]$ & $\mathcal{C}_{27}$ & $\qty[6,6,6,1,1,1]$ \\ 
\hline 
$\mathcal{C}_{4}$ & $\qty[4,4,4,1,4,4]$ & $\mathcal{C}_{28}$ & $\qty[6,6,6,2,2,1]$ \\ 
\hline 
$\mathcal{C}_{5}$ & $\qty[4,4,4,1,1,1]$ & $\mathcal{C}_{29}$ & $\qty[6,3,2,5,4,1]$ \\ 
\hline 
$\mathcal{C}_{6}$ & $\qty[4,4,6,6,6,5]$ & $\mathcal{C}_{30}$ & $\qty[6,3,2,1,1,1]$ \\ 
\hline 
$\mathcal{C}_{7}$ & $\qty[4,4,6,6,2,2]$ & $\mathcal{C}_{31}$ & $\qty[6,3,2,2,2,1]$ \\ 
\hline 
$\mathcal{C}_{8}$ & $\qty[4,5,5,5,4,4]$ & $\mathcal{C}_{32}$ & $\qty[2,5,5,5,2,2]$ \\ 
\hline 
$\mathcal{C}_{9}$ & $\qty[4,6,6,5,4,4]$ & $\mathcal{C}_{33}$ & $\qty[2,5,2,2,2,2]$ \\ 
\hline 
$\mathcal{C}_{10}$ & $\qty[4,6,6,2,2,4]$ & $\mathcal{C}_{34}$ & $\qty[2,1,4,3,3,3]$ \\ 
\hline 
$\mathcal{C}_{11}$ & $\qty[4,1,1,1,4,4]$ & $\mathcal{C}_{35}$ & $\qty[2,1,4,3,6,5]$ \\ 
\hline 
$\mathcal{C}_{12}$ & $\qty[4,1,1,1,1,1]$ & $\mathcal{C}_{36}$ & $\qty[2,1,4,3,2,2]$ \\ 
\hline 
$\mathcal{C}_{13}$ & $\qty[4,3,2,5,4,4]$ & $\mathcal{C}_{37}$ & $\qty[2,1,6,6,6,5]$ \\ 
\hline 
$\mathcal{C}_{14}$ & $\qty[4,3,2,2,2,4]$ & $\mathcal{C}_{38}$ & $\qty[2,1,6,6,2,2]$ \\ 
\hline 
$\mathcal{C}_{15}$ & $\qty[5,5,4,3,3,3]$ & $\mathcal{C}_{39}$ & $\qty[2,1,1,1,3,3]$ \\ 
\hline 
$\mathcal{C}_{16}$ & $\qty[5,5,4,3,6,5]$ & $\mathcal{C}_{40}$ & $\qty[2,1,1,1,6,5]$ \\ 
\hline 
$\mathcal{C}_{17}$ & $\qty[5,5,5,5,2,5]$ & $\mathcal{C}_{41}$ & $\qty[2,1,1,1,2,2]$ \\ 
\hline 
$\mathcal{C}_{18}$ & $\qty[5,5,6,6,6,5]$ & $\mathcal{C}_{42}$ & $\qty[3,5,5,5,4,3]$ \\ 
\hline 
$\mathcal{C}_{19}$ & $\qty[5,5,1,1,3,3]$ & $\mathcal{C}_{43}$ & $\qty[3,5,5,1,1,3]$ \\ 
\hline 
$\mathcal{C}_{20}$ & $\qty[5,5,1,1,6,5]$ & $\mathcal{C}_{44}$ & $\qty[3,3,6,3,3,3]$ \\ 
\hline 
$\mathcal{C}_{21}$ & $\qty[5,5,2,2,2,5]$ & $\mathcal{C}_{45}$ & $\qty[3,3,6,6,6,3]$ \\ 
\hline 
$\mathcal{C}_{22}$ & $\qty[6,5,5,5,4,1]$ & $\mathcal{C}_{46}$ & $\qty[3,3,2,5,4,3]$ \\ 
\hline 
$\mathcal{C}_{23}$ & $\qty[6,5,5,1,1,1]$ & $\mathcal{C}_{47}$ & $\qty[3,3,2,1,1,3]$ \\ 
\hline 
$\mathcal{C}_{24}$ & $\qty[6,6,6,3,3,3]$ & $\mathcal{C}_{48}$ & $\qty[3,3,2,2,2,3]$ \\ 
\hline 
\end{tabular}
\caption{All $48$ planar collections of trees for $n=6$ in a compact notation tailored to this case and explained in the text.}
\label{tb:n6coll}
\end{table}

\section{\label{polymake}Triangulation Functions in \textsc{PolyMake}}
In this section we show how to use the {\tt TRIANGULATION}
function of \textsc{PolyMake} needed in section \ref{sec5}.

In practice, the computation of
the vertices $\{V_{1},V_{2},V_{3},V_{4},V_{5}\}$ as well as the triangulation
that assigns them to two simplices $\Delta_{1}$ and $\Delta_{2}$
is automated by the software \textsc{PolyMake}. The only input needed is the
set of faces (\ref{eq:6z}). We now provide the corresponding
script to automate the process, given by the following command lines:

\begin{verbatim}
    open(INPUT,"<","boundaries.txt");
    
    $mtrr= new Array<String>(<INPUT>);close(INPUT);$t1=time;
    
    for(my $i=0;$i<scalar(@{$mtrr});$i++)
        {print "$i\n";@s1=split("X",$mtrr->[$i]);@arm=();
       
        for(my $j=0;$j<scalar(@s1);$j++)
            {@dst=split(",",$s1[$j]);
            $arm[$j]=new Vector<Rational>(@dst)};
       
        $planes=new Matrix<Rational>(@arm);
        $pol= new Polytope(INEQUALITIES=>$planes);
        
        open(my $f,">>","facets.txt");
        print $f $pol->TRIANGULATION->FACETS, "\n"; close $f;
        open(my $g,">>","vertices.txt");
        print $g $pol->VERTICES , "\n"; close $g;};
    
    $t2=time;print $t2-$t1;
\end{verbatim}

In order to implement the \textsc{PolyMake} script we need to create a file {\tt boundaries.txt}, in which each row corresponds to the planes $Z_i$'s of a collection. In a row, different vectors are separated by the character {\tt X} and vector components are separated by commas. For instance, for the bipyramid case \eqref{eq:6z}, the row reads: 
\begin{verbatim}
1,0,0,0X0,1,0,0X0,0,1,0X0,0,0,1X0,1,-1,1X1,0,-1,1
\end{verbatim}
An arbitrary number of collections (for instance to obtain the full amplitude) can be processed simply by adding rows into the {\tt .txt} file. The script will display a counter to indicate the row being processed, as well as the total time of the computation (in seconds) once it is completed.

The output of the script are two text files {\tt facets.txt} and {\tt vertices.txt}. For the previous example {\tt vertices.txt} contains
\begin{verbatim}
0 0 1 1
1 0 0 0
1 1 1 0
0 1 0 0
0 0 0 1
\end{verbatim}
This is just a list of vertices $V_i$ (note the unusual indexation starting from $i=0$):
\begin{eqnarray}
V_0=(0,0,1,1)\,,V_1=(1,0,0,0)\,,V_2=(1,1,1,0)\ldots
\end{eqnarray}
while {\tt facets.txt} contains
\begin{verbatim}
{0 1 2 3}
{0 1 3 4}
\end{verbatim}
meaning that the full object can be triangulated by two simplices, with vertices $\{V_0,V_1,V_2,V_3\}$ and $\{V_0,V_1,V_3,V_4\}$ respectively (these are relabellings of the previous simplices $\Delta_1$ and $\Delta_2$). The corresponding contribution to the amplitude is then given by the volume formula \eqref{eq:vt}.

\bibliographystyle{JHEP}
\bibliography{references}

\providecommand{\href}[2]{#2}\begingroup\raggedright\begin{thebibliography}{10}

\bibitem{Fairlie:1972zz}
D.~Fairlie and D.~Roberts, {\it Dual models without tachyons—a new approach},
   {\em unpublished Durham preprint PRINT-72-2440} {\bf 1972} (1972).

\bibitem{Fairlie:2008dg}
D.~B. Fairlie, {\it {A Coding of Real Null Four-Momenta into World-Sheet
  Coordinates}},  {\em Adv. Math. Phys.} {\bf 2009} (2009) 284689,
  [\href{http://arxiv.org/abs/0805.2263}{{\tt arXiv:0805.2263}}].

\bibitem{Cachazo:2013gna}
F.~Cachazo, S.~He, and E.~Y. Yuan, {\it {Scattering equations and
  Kawai-Lewellen-Tye orthogonality}},  {\em Phys. Rev.} {\bf D90} (2014), no.~6
  065001, [\href{http://arxiv.org/abs/1306.6575}{{\tt arXiv:1306.6575}}].

\bibitem{Cachazo:2013hca}
F.~Cachazo, S.~He, and E.~Y. Yuan, {\it {Scattering of Massless Particles in
  Arbitrary Dimensions}},  {\em Phys. Rev. Lett.} {\bf 113} (2014), no.~17
  171601, [\href{http://arxiv.org/abs/1307.2199}{{\tt arXiv:1307.2199}}].

\bibitem{Dolan:2013isa}
L.~Dolan and P.~Goddard, {\it {Proof of the Formula of Cachazo, He and Yuan for
  Yang-Mills Tree Amplitudes in Arbitrary Dimension}},  {\em JHEP} {\bf 05}
  (2014) 010, [\href{http://arxiv.org/abs/1311.5200}{{\tt arXiv:1311.5200}}].

\bibitem{Cachazo:2019ngv}
F.~Cachazo, N.~Early, A.~Guevara, and S.~Mizera, {\it {Scattering Equations:
  From Projective Spaces to Tropical Grassmannians}},  {\em JHEP} {\bf 06}
  (2019) 039, [\href{http://arxiv.org/abs/1903.08904}{{\tt arXiv:1903.08904}}].

\bibitem{speyer2004tropical}
D.~Speyer and B.~Sturmfels, {\it The tropical grassmannian},  {\em Advances in
  Geometry} {\bf 4} (2004), no.~3 389--411.

\bibitem{speyer2005tropical}
D.~Speyer and L.~Williams, {\it The tropical totally positive grassmannian},
  {\em Journal of Algebraic Combinatorics} {\bf 22} (2005), no.~2 189--210.

\bibitem{Cachazo:2019apa}
F.~Cachazo and J.~M. Rojas, {\it {Notes on Biadjoint Amplitudes, ${\rm
  Trop}\,G(3,7)$ and $X(3,7)$ Scattering Equations}},  {\em JHEP} {\bf 04}
  (2020) 176, [\href{http://arxiv.org/abs/1906.05979}{{\tt arXiv:1906.05979}}].

\bibitem{Sepulveda:2019vrz}
D.~García~Sepúlveda and A.~Guevara, {\it {A Soft Theorem for the Tropical
  Grassmannian}},  \href{http://arxiv.org/abs/1909.05291}{{\tt
  arXiv:1909.05291}}.

\bibitem{Cachazo:2019ble}
F.~Cachazo, B.~Umbert, and Y.~Zhang, {\it {Singular Solutions in Soft Limits}},
   {\em JHEP} {\bf 05} (2020) 148, [\href{http://arxiv.org/abs/1911.02594}{{\tt
  arXiv:1911.02594}}].

\bibitem{GarciaSepulveda:2019jxn}
D.~Garc\'\i{}a~Sep\'ulveda and A.~Guevara, {\it {A Soft Theorem for the
  Tropical Grassmannian}},  \href{http://arxiv.org/abs/1909.05291}{{\tt
  arXiv:1909.05291}}.

\bibitem{Abhishek:2020xfy}
M.~Abhishek, S.~Hegde, D.~P. Jatkar, and A.~P. Saha, {\it {Double soft theorem
  for generalised biadjoint scalar amplitudes}},  {\em SciPost Phys.} {\bf 10}
  (2021), no.~2 036, [\href{http://arxiv.org/abs/2008.07271}{{\tt
  arXiv:2008.07271}}].

\bibitem{Early:2022mdn}
N.~Early, {\it {Factorization for Generalized Biadjoint Scalar Amplitudes via
  Matroid Subdivisions}},  \href{http://arxiv.org/abs/2211.16623}{{\tt
  arXiv:2211.16623}}.

\bibitem{Cachazo:2021wsz}
F.~Cachazo, N.~Early, and B.~Gim\'enez~Umbert, {\it {Smoothly splitting
  amplitudes and semi-locality}},  {\em JHEP} {\bf 08} (2022) 252,
  [\href{http://arxiv.org/abs/2112.14191}{{\tt arXiv:2112.14191}}].

\bibitem{herrmann2009draw}
S.~Herrmann, A.~Jensen, M.~Joswig, and B.~Sturmfels, {\it How to draw tropical
  planes},  {\em the electronic journal of combinatorics} {\bf 16} (2009),
  no.~2 6.

\bibitem{Borges:2019csl}
F.~Borges and F.~Cachazo, {\it {Generalized Planar Feynman Diagrams:
  Collections}},  {\em JHEP} {\bf 11} (2020) 164,
  [\href{http://arxiv.org/abs/1910.10674}{{\tt arXiv:1910.10674}}].

\bibitem{SpeyerW}
D.~{Speyer} and L.~K. {Williams}, {\it {The tropical totally positive
  Grassmannian}},  {\em arXiv Mathematics e-prints} (Dec, 2003) math/0312297,
  [\href{http://arxiv.org/abs/math/0312297}{{\tt math/0312297}}].

\bibitem{Drummond:2019qjk}
J.~Drummond, J.~Foster, O.~G\"urdogan, and C.~Kalousios, {\it {Tropical
  Grassmannians, cluster algebras and scattering amplitudes}},  {\em JHEP} {\bf
  04} (2020) 146, [\href{http://arxiv.org/abs/1907.01053}{{\tt
  arXiv:1907.01053}}].

\bibitem{ClusterA}
S.~Fomin and A.~Zelevinsky, {\it Cluster algebras i: foundations},  {\em
  Journal of the American mathematical society} {\bf 15} (2002), no.~2
  497--529.

\bibitem{ClusterB}
S.~{Fomin} and A.~{Zelevinsky}, {\it {Cluster algebras II: Finite type
  classification}},  {\em Inventiones Mathematicae} {\bf 154} (Oct, 2003)
  63--121, [\href{http://arxiv.org/abs/math/0208229}{{\tt math/0208229}}].

\bibitem{ClusterC}
A.~{Berenstein}, S.~{Fomin}, and A.~{Zelevinsky}, {\it {Cluster algebras III:
  Upper bounds and double Bruhat cells}},  {\em arXiv Mathematics e-prints}
  (May, 2003) math/0305434, [\href{http://arxiv.org/abs/math/0305434}{{\tt
  math/0305434}}].

\bibitem{Guevara:2020lek}
A.~Guevara and Y.~Zhang, {\it {Planar Matrices and Arrays of Feynman Diagrams:
  Poles for Higher $k$}},  \href{http://arxiv.org/abs/2007.15679}{{\tt
  arXiv:2007.15679}}.

\bibitem{Frost:2018djd}
H.~Frost, {\it {Biadjoint scalar tree amplitudes and intersecting dual
  associahedra}},  {\em JHEP} {\bf 06} (2018) 153,
  [\href{http://arxiv.org/abs/1802.03384}{{\tt arXiv:1802.03384}}].

\bibitem{He:2020ray}
S.~He, L.~Ren, and Y.~Zhang, {\it {Notes on polytopes, amplitudes and boundary
  configurations for Grassmannian string integrals}},  {\em JHEP} {\bf 04}
  (2020) 140, [\href{http://arxiv.org/abs/2001.09603}{{\tt arXiv:2001.09603}}].

\bibitem{Arkani-Hamed:2017tmz}
N.~Arkani-Hamed, Y.~Bai, and T.~Lam, {\it {Positive Geometries and Canonical
  Forms}},  {\em JHEP} {\bf 11} (2017) 039,
  [\href{http://arxiv.org/abs/1703.04541}{{\tt arXiv:1703.04541}}].

\bibitem{Arkani-Hamed:2019mrd}
N.~Arkani-Hamed, S.~He, and T.~Lam, {\it {Stringy canonical forms}},  {\em
  JHEP} {\bf 02} (2021) 069, [\href{http://arxiv.org/abs/1912.08707}{{\tt
  arXiv:1912.08707}}].

\bibitem{Arkani-Hamed:2017mur}
N.~Arkani-Hamed, Y.~Bai, S.~He, and G.~Yan, {\it {Scattering Forms and the
  Positive Geometry of Kinematics, Color and the Worldsheet}},  {\em JHEP} {\bf
  05} (2018) 096, [\href{http://arxiv.org/abs/1711.09102}{{\tt
  arXiv:1711.09102}}].

\bibitem{Dolan:2014ega}
L.~Dolan and P.~Goddard, {\it {The Polynomial Form of the Scattering
  Equations}},  {\em JHEP} {\bf 07} (2014) 029,
  [\href{http://arxiv.org/abs/1402.7374}{{\tt arXiv:1402.7374}}].

\bibitem{Agostini:2021rze}
D.~Agostini, T.~Brysiewicz, C.~Fevola, L.~K\"uhne, B.~Sturmfels, S.~Telen, and
  T.~Lam, {\it {Likelihood degenerations}},  {\em Adv. Math.} {\bf 414} (2023)
  108863, [\href{http://arxiv.org/abs/2107.10518}{{\tt arXiv:2107.10518}}].

\bibitem{Sturmfels:2020mpv}
B.~Sturmfels and S.~Telen, {\it {Likelihood Equations and Scattering
  Amplitudes}},  \href{http://arxiv.org/abs/2012.05041}{{\tt
  arXiv:2012.05041}}.

\bibitem{Cachazo:2020uup}
F.~Cachazo and N.~Early, {\it {Minimal Kinematics: An All $k$ and $n$ Peek into
  ${\rm Trop}^+{\rm G}(k,n)$}},  {\em SIGMA} {\bf 17} (2021) 078,
  [\href{http://arxiv.org/abs/2003.07958}{{\tt arXiv:2003.07958}}].

\bibitem{Cachazo:2020wgu}
F.~Cachazo and N.~Early, {\it {Planar Kinematics: Cyclic Fixed Points, Mirror
  Superpotential, k-Dimensional Catalan Numbers, and Root Polytopes}},
  \href{http://arxiv.org/abs/2010.09708}{{\tt arXiv:2010.09708}}.

\bibitem{Arkani-Hamed:2020tuz}
N.~Arkani-Hamed, S.~He, and T.~Lam, {\it {Cluster Configuration Spaces of
  Finite Type}},  {\em SIGMA} {\bf 17} (2021) 092,
  [\href{http://arxiv.org/abs/2005.11419}{{\tt arXiv:2005.11419}}].

\bibitem{Arkani-Hamed:2019plo}
N.~Arkani-Hamed, S.~He, T.~Lam, and H.~Thomas, {\it {Binary geometries,
  generalized particles and strings, and cluster algebras}},  {\em Phys. Rev.
  D} {\bf 107} (2023), no.~6 066015,
  [\href{http://arxiv.org/abs/1912.11764}{{\tt arXiv:1912.11764}}].

\bibitem{He:2021zuv}
S.~He, Y.~Wang, Y.~Zhang, and P.~Zhao, {\it {Notes on worldsheet-like variables
  for cluster configuration spaces}},
  \href{http://arxiv.org/abs/2109.13900}{{\tt arXiv:2109.13900}}.

\bibitem{Drummond:2020kqg}
J.~Drummond, J.~Foster, O.~G\"urdo\u{g}an, and C.~Kalousios, {\it {Tropical
  fans, scattering equations and amplitudes}},  {\em JHEP} {\bf 11} (2021) 071,
  [\href{http://arxiv.org/abs/2002.04624}{{\tt arXiv:2002.04624}}].

\bibitem{Gates:2021tnp}
S.~J. Gates, S.~N. Hazel~Mak, M.~Spradlin, and A.~Volovich, {\it {Cluster
  Superalgebras and Stringy Integrals}},
  \href{http://arxiv.org/abs/2111.08186}{{\tt arXiv:2111.08186}}.

\bibitem{Henke:2021ity}
N.~Henke and G.~Papathanasiou, {\it {Singularities of eight- and nine-particle
  amplitudes from cluster algebras and tropical geometry}},  {\em JHEP} {\bf
  10} (2021) 007, [\href{http://arxiv.org/abs/2106.01392}{{\tt
  arXiv:2106.01392}}].

\bibitem{Lukowski:2020dpn}
T.~Lukowski, M.~Parisi, and L.~K. Williams, {\it {The positive tropical
  Grassmannian, the hypersimplex, and the m=2 amplituhedron}},
  \href{http://arxiv.org/abs/2002.06164}{{\tt arXiv:2002.06164}}.

\bibitem{Early:2019eun}
N.~Early, {\it {Planar kinematic invariants, matroid subdivisions and
  generalized Feynman diagrams}},  \href{http://arxiv.org/abs/1912.13513}{{\tt
  arXiv:1912.13513}}.

\bibitem{Early:2019zyi}
N.~Early, {\it {From weakly separated collections to matroid subdivisions}},
  \href{http://arxiv.org/abs/1910.11522}{{\tt arXiv:1910.11522}}.

\bibitem{Henke:2019hve}
N.~Henke and G.~Papathanasiou, {\it {How tropical are seven- and eight-particle
  amplitudes?}},  {\em JHEP} {\bf 08} (2020) 005,
  [\href{http://arxiv.org/abs/1912.08254}{{\tt arXiv:1912.08254}}].

\bibitem{Arkani-Hamed:2019rds}
N.~Arkani-Hamed, T.~Lam, and M.~Spradlin, {\it {Non-perturbative geometries for
  planar $ \mathcal{N} $ = 4 SYM amplitudes}},  {\em JHEP} {\bf 03} (2021) 065,
  [\href{http://arxiv.org/abs/1912.08222}{{\tt arXiv:1912.08222}}].

\bibitem{Drummond:2019cxm}
J.~Drummond, J.~Foster, O.~G\"urdogan, and C.~Kalousios, {\it {Algebraic
  singularities of scattering amplitudes from tropical geometry}},  {\em JHEP}
  {\bf 04} (2021) 002, [\href{http://arxiv.org/abs/1912.08217}{{\tt
  arXiv:1912.08217}}].

\bibitem{Cachazo:2022pnx}
F.~Cachazo, N.~Early, and Y.~Zhang, {\it {Color-Dressed Generalized Biadjoint
  Scalar Amplitudes: Local Planarity}},
  \href{http://arxiv.org/abs/2212.11243}{{\tt arXiv:2212.11243}}.

\bibitem{Cachazo:2023ltw}
F.~Cachazo, N.~Early, and Y.~Zhang, {\it {Generalized Color Orderings: CEGM
  Integrands and Decoupling Identities}},
  \href{http://arxiv.org/abs/2304.07351}{{\tt arXiv:2304.07351}}.

\end{thebibliography}\endgroup

\end{document}